\newtheorem{notation}[theorem]{Notation}
\begin{document}

\ensubject{fdsfd}
\ArticleType{ARTICLES}
\Year{2023}
\Month{}%
\Vol{to appear}
\No{}
\BeginPage{1} %
\DOI{}
\ReceiveDate{January 5, 2023}
\AcceptDate{April 17, 2023}
\OnlineDate{January 1, 2022}

\title[]{Parametric ``Non-nested" Discriminants\\for Multiplicities of Univariate Polynomials\footnotemark[4]\footnotetext[4]{This paper has been accepted for publication in SCIENCE CHINA Mathematics.}}{Parametric ``Non-nested" Discriminants for Multiplicities of Univariate Polynomials}

\author[1]{Hong Hong}{hong@ncsu.edu}
\author[2,$\ast$]{Jing Yang}{{yangjing0930@gmail.com}}

\AuthorMark{Hoon Hong}

\AuthorCitation{Hoon Hong, Jing Yang}

\address[1]{Department of Mathematics, North Carolina State University, Box 8205, Raleigh, NC {\rm 27695}, USA}
\address[2]{SMS--HCIC--School of Mathematics and Physics, Center for Applied Mathematics of Guangxi,\\Guangxi Minzu University, Nanning {\rm 530006}, China}

\abstract{We consider the problem of complex root classification, i.e., finding the
conditions on the coefficients of a univariate polynomial for all possible
multiplicity structures on its complex roots. It is well known that such
conditions can be written as conjunctions of several polynomial equations and
one inequation in the coefficients. Those polynomials in the coefficients are
called discriminants for multiplicities. It is also known that discriminants
can be obtained by using repeated parametric gcd's. The resulting
discriminants are usually nested determinants, that is, determinants of
matrices whose entries are determinants, and so on. In this paper, we give a
new type of discriminants which are not based on repeated gcd's. The new
discriminants are simpler in the sense that they are non-nested  determinants
and have smaller maximum degrees.
}

\keywords{Parametric polynomial, complex roots,
discriminant, multiplicity,
resultant}

\MSC{12D10, 68W30}

\maketitle

\section{Introduction}

In this paper, we consider the problem of complex root classification, i.e.,
finding the conditions on the coefficients of a polynomial {over the complex
field $\mathbb{C}$} for every potential multiplicity structure its complex
roots may have. For example, consider a quintic polynomial $F=a_{5}x^{5}%
+a_{4}x^{4}+a_{3}x^{3}+a_{2}x^{2}+a_{1}x+a_{0}$ {where $a_{i}$'s take values
over~$\mathbb{C}$}. We would like to find conditions $C_{0},C_{1},\ldots
,C_{6}$ on $a=(a_{0},\ldots,a_{5})$ such that%
\[
\text{multiplicity structure of }F =\left\{
\begin{array}
[c]{lll}%
\left(  1,1,1,1,1\right),  & \text{if } & C_{0}\left(  a\right)  \ \text{holds;}%
\\
(2,1,1,1), & \text{if} & C_{1}\left(  a\right)  \ \text{holds};\\
(2,2,1), & \text{if} & C_{2}\left(  a\right)  \ \text{holds};\\
(3,1,1), & \text{if} & C_{3}\left(  a\right)  \ \text{holds};\\
(3,2), & \text{if} & C_{4}\left(  a\right)  \ \text{holds};\\
\left(  4,1\right),  & \text{if} & C_{5}\left(  a\right)  \ \text{holds};\\
\left(  5\right),  & \text{if} & C_{6}\left(  a\right)  \ \text{holds}.%
\end{array}
\right.
\]

\noindent In general, the problem is stated as follows:

\noindent\textbf{Problem}: \emph{For every} $\boldsymbol{\mu}=\left(  \mu
_{1},\ldots,\mu_{m}\right)  $ \emph{such that} $\mu_{1}\ge\ldots\ge\mu_{m}>0$
\emph{and} $\mu_{1}+\cdots+\mu_{m}=n$\emph{, find a condition on the
coefficients of a polynomial }$F$\emph{{over $\mathbb{C}$} of degree~}$n$ such
that the multiplicity structure of $F$ is $\boldsymbol{\mu}$.

The problem is important because many tasks in mathematics, science
and engineering can be reduced to the problem. Due to its importance, the
problem and several related problems have been already carefully studied
\cite{1998_Gonzalez_Recio_Lombardi,2021_Hong_Yang,2006_Liang_Jeffrey,2008_Liang_Jeffrey_Maza,1999_Liang_Zhang,1996_Yang_Hou_Zeng}%
.

The problem can be viewed as a generalization of the well known problem of
finding a condition on coefficients such that the polynomial has the given
number of distinct roots. This subproblem has been extensively studied. For
instance, the subdiscriminant theory provides a complete solution to the
subproblem: a univariate polynomial of degree $n$ has $m$ distinct roots if
and only if its $0$-th, $\ldots$, $(n-m-1)$-th psd's (i.e., principal
subdiscriminant coefficient) vanish and the $(n-m)$-th psd does not. For
details, see standard textbooks on computational
algebra~(e.g.,~\cite{2006_Basu_Pollack_Roy}).

In \cite{1996_Yang_Hou_Zeng}, Yang, Hou and Zeng gave an algorithm to generate
conditions for discriminating different multiplicity structures of a
univariate polynomial (referred as YHZ's condition hereinafter) by making use
of repeated gcd computation for parametric polynomials
\cite{1971_Brown_Traub,1967_Collins,1983_Loos}. It is based on a similar idea
adopted by Gonzalez-Vega et al. \cite{1998_Gonzalez_Recio_Lombardi} for
solving the real root classification and quantifier elimination problems by
using Sturm-Habicht sequences. The conditions produced by these methods are
conjunctions of several polynomial equations and one inequation on the
coefficients. Those polynomials in the coefficients are called discriminants
for multiplicities. The maximum degree of the discriminants grows
exponentially in the degree of $F$. Furthermore, each discriminant is a
``nested'' determinant, that is, it is a determinant of a matrix whose entries
are again determinants and so on.

In \cite{2021_Hong_Yang}, the authors developed a new type of multiplicity
discriminants to distinguish different multiplicities when the number of
distinct roots is fixed. The main idea is to convert the multiplicity
condition expressed as a permanent inequation in roots into a sum of
determinants in coefficients. In order to generate conditions for all the
possible multiplicity structures of a univariate polynomial, one may first use
subdiscriminants in classical resultant theory to decide the number of
distinct complex roots and then add one more inequation to discriminate
different multiplicity structures with the same number of distinct roots. In
the new condition, the maximum degree of the discriminants grows linearly in
the degree of $F$, which makes the size of discriminants significantly
smaller. However, the form of resulting discriminants is a sum of many
determinants, which makes the further analysis (reasoning) difficult.

The main contribution in this paper is to provide a new type of discriminants,
which are \emph{non-nested} determinants and whose maximum degrees are
\emph{smaller} than those in the previous methods. The method is based on a
significantly different theory and techniques from the previous methods (which
are essentially based on repeated parametric gcd or subdiscriminant theory).
The new condition is given by a newly devised multiplicity discriminant in
coefficients for every potential multiplicity vector of a given degree, which
can be viewed as a generalization of subdiscriminant theory to higher order
derivatives. To build up the connection between the new discriminants and
multiple roots, we first convert it into the ratio of two determinants in
terms of generic roots (without considering the multiplicities). Then by
making use of the connection between divided difference with multiple nodes
and the derivatives of higher orders at the nodes, we integrate the
multiplicity information into the expression and convert it into an expression
in terms of multiple roots. After careful manipulation, it is shown that the
new discriminant can capture the multiplicity information.

The paper is structured as follows. In Section \ref{sec:problem}, we first
present the problem to be solved in a formal way. In Section \ref{sec:main},
we give a precise statement of the main result of the paper (Theorem
\ref{thm:main_result}). Then a proof of Theorem \ref{thm:main_result} is
provided in Section \ref{sec:proof}. The proof is long thus we divide the
proof into three subsections which are interesting on their own. In Section
\ref{sec:comparison}, we compare the form and size of polynomials in the
multiplicity-discriminant condition in Theorem \ref{thm:main_result} and those
given by previous works.

\section{Problem}

\label{sec:problem}

\begin{definition}
[Multiplicity vector]Let $F\in\mathbb{C}\left[  x\right]  $ with $m$ distinct
complex roots, say $r_{1},\ldots,r_{m}$, with multiplicities $\mu_{1}%
,\ldots,\mu_{m}$ respectively. Without losing generality, we assume that
$\mu_{1}\geq\cdots\geq\mu_{m}>0$. Then the \emph{multiplicity} vector of~$F$,
written as $\operatorname*{mult}\left(  F\right)  $, is defined by%
\[
\operatorname*{mult}\left(  F\right)  =\left(  \mu_{1},\ldots,\mu_{m}\right).
\]

\end{definition}

\begin{example}
Let $F=x^{5}-5x^{4}+7x^{3}+x^{2}-8x+4$. Then $\operatorname*{mult}\left(
F\right)  =\left(  2,2,1\right)  $, since it can be verified that $F$
$=\left(  x-1\right)  ^{2}\left(  x+1\right)  ^{1}\left(  x-2\right)  ^{2}$.
Note that the multiplicity vector is a partition of $5,$ which is the degree
of $F$.
\end{example}

\begin{definition}
[Potential multiplicity vectors]Let $n$ be a positive integer. Let
$\mathcal{M}(n)$ stand for the set of all the potential multiplicity vectors
of polynomials of degree $n$, equivalently, the set of all partitions of $n,$
that is,
\[
\mathcal{M}(n)=\left\{  (\mu_{1},\ldots,\mu_{m}):\,\mu_{1}+\cdots+\mu
_{m}=n,\mu_{1}\geq\cdots\geq\mu_{m}>0\right\}.
\]

\end{definition}

\begin{example}
$\mathcal{M}\left(  5\right)  =\left\{  \ \left(  1,1,1,1,1\right)
,\ \ \left(  2,1,1,1\right)  ,\ \left(  2,2,1\right)  ,\ \left(  3,1,1\right)
,\ \left(  3,2\right)  ,\ \left(  4,1\right)  ,\ \left(  5\right)  \ \right\}
$.
\end{example}


\begin{problem}
[Parametric multiplicity problem]The parametric multiplicity problem is stated as:

\begin{enumerate}
\item[In\ :] $n$, a positive integer standing for the polynomial of degree $n
$ with parametric coefficients $a$, that is,%
\[
F=\sum_{i=0}^{n}a_{i}x^{i}\ \ \text{where}\ \ a_{n}\neq0.
\]

\item[Out:] For each $\boldsymbol{\mu}\in\mathcal{M}(n)$, find a condition
$C_{\boldsymbol{\mu}}\ $on $a\ $such that $\operatorname*{mult}\left(
F\right)  =\boldsymbol{\mu}$.
\end{enumerate}
\end{problem}

\section{Main Result}

\label{sec:main}

\begin{definition}
[Determinant polynomial]Consider a vector of univariate polynomials
\[
P=\left[
\begin{array}
[c]{c}%
P_{0}\\
\vdots\\
P_{k}%
\end{array}
\right]  \in\mathbb{C}[x]^{k+1}%
\]
where $\deg P_{i}\leq k\ \ \text{and}\ \ P_{i}=\sum_{0\leq j\leq k}a_{ij}%
x^{j}. $ The \emph{coefficient matrix }of $P,$ written as $C\left(  P\right)
,$ is defined by
\[
C\left(  P\right)  =\operatorname*{coef}\left(  P\right)  =%
\begin{bmatrix}
\operatorname*{coef}\left(  P_{0}\right) \\
\vdots\\
\operatorname*{coef}\left(  P_{k}\right)
\end{bmatrix}
=%
\begin{bmatrix}
a_{0k} & \cdots & a_{00}\\
\vdots &  & \vdots\\
a_{kk} & \cdots & a_{k0}%
\end{bmatrix}.
\]
The \emph{determinant polynomial} of $P,$ written as $\operatorname*{dp}%
\left(  P\right)  ,$ is defined by%
\[
\operatorname*{dp}\left(  P\right)  =| C\left(  P\right) |.
\]
\end{definition}

\begin{definition}
[Multiplicity Discriminant]Let $F=\sum_{i=0}^{n}a_{i}x^{i}\ \ $where$\ \ a_{n}%
\neq0$. Let $\boldsymbol{\gamma}=\left(  \gamma_{1},\ldots,\gamma_{s}\right)
\in\mathcal{M}\left(  n\right)  $. The the $\boldsymbol{\gamma}$%
-\emph{discriminant }of $F$, written as $D\left(  \boldsymbol{\gamma}\right)
,\ $is defined by%
\[
D\left(  \boldsymbol{\gamma}\right)  =\frac{1}{a_{n}}\operatorname*{dp}\left[
\begin{array}
[c]{l}%
F^{(0)}x^{\gamma_{0}-1}\\
~~~~\ \vdots\\
F^{(0)}x^{0}\\\hline
F^{(1)}x^{\gamma_{1}-1}\\
~~~~\ \vdots\\
F^{(1)}x^{0}\\\hline
~~~~\ \vdots\\\hline
F^{(s)}x^{\gamma_{s}-1}\\
~~~~\ \vdots\\
F^{(s)}x^{0}%
\end{array}
\right]
\]
where $\gamma_{0}$ is the smallest so that the above matrix is square
{and $F^{(i)}$ is the $i$-th derivative of $F$ in terms of $x$}. It is straightforward to show that $\gamma_{0}=\gamma_{1}-1$.
\end{definition}

\begin{example}
Let $n=5$ and $F=\sum_{i=0}^{n}a_{i}x^{i}$ and $a_{n}\neq0$. Then%
\[%
\begin{array}
[c]{lll}%
D\left(  5\right)  & = & \operatorname*{dp}\left[
\begin{array}
[c]{l}%
F^{(0)}x^{3}\\
F^{(0)}x^{2}\\
F^{(0)}x^{1}\\
F^{(0)}x^{0}\\\hline
F^{(1)}x^{4}\\
F^{(1)}x^{3}\\
F^{(1)}x^{2}\\
F^{(1)}x^{1}\\
F^{(1)}x^{0}%
\end{array}
\right]  =\dfrac{1}{a_{5}}\left\vert
\begin{array}
[c]{ccccccccc}%
a_{5} & a_{4} & a_{3} & a_{2} & a_{1} & a_{0} &  &  & \\
& a_{5} & a_{4} & a_{3} & a_{2} & a_{1} & a_{0} &  & \\
&  & a_{5} & a_{4} & a_{3} & a_{2} & a_{1} & a_{0} & \\
&  &  & a_{5} & a_{4} & a_{3} & a_{2} & a_{1} & a_{0}\\\hline
5a_{5} & 4a_{4} & 3a_{3} & 2a_{2} & 1a_{1} &  &  &  & \\
& 5a_{5} & 4a_{4} & 3a_{3} & 2a_{2} & 1a_{1} &  &  & \\
&  & 5a_{5} & 4a_{4} & 3a_{3} & 2a_{2} & 1a_{1} &  & \\
&  &  & 5a_{5} & 4a_{4} & 3a_{3} & 2a_{2} & 1a_{1} & \\
&  &  &  & 5a_{5} & 4a_{4} & 3a_{3} & 2a_{2} & 1a_{1}%
\end{array}
\right\vert,\\
D\left(  4,1\right)  & = & \operatorname*{dp}\left[
\begin{array}
[c]{l}%
F^{(0)}x^{2}\\
F^{\left(  0\right)  }x^{1}\\
F^{(0)}x^{0}\\\hline
F^{(1)}x^{3}\\
F^{(1)}x^{2}\\
F^{(1)}x^{1}\\
F^{(1)}x^{0}\\\hline
F^{(2)}x^{0}%
\end{array}
\right]  =\dfrac{1}{a_{5}}\left\vert
\begin{array}
[c]{cccccccc}%
a_{5} & a_{4} & a_{3} & a_{2} & a_{1} & a_{0} &  & \\
& a_{5} & a_{4} & a_{3} & a_{2} & a_{1} & a_{0} & \\
&  & a_{5} & a_{4} & a_{3} & a_{2} & a_{1} & a_{0}\\\hline
5a_{5} & 4a_{4} & 3a_{3} & 2a_{2} & 1a_{1} &  &  & \\
& 5a_{5} & 4a_{4} & 3a_{3} & 2a_{2} & 1a_{1} &  & \\
&  & 5a_{5} & 4a_{4} & 3a_{3} & 2a_{2} & 1a_{1} & \\
&  &  & 5a_{5} & 4a_{4} & 3a_{3} & 2a_{2} & 1a_{1}\\\hline
&  &  &  & 5\cdot4a_{5} & 4\cdot3a_{4} & 3\cdot2a_{3} & 2\cdot1a_{2}%
\end{array}
\right\vert,\\
D\left(  3,2\right)  & = & \operatorname*{dp}\left[
\begin{array}
[c]{l}%
F^{(0)}x^{1}\\
F^{(0)}x^{0}\\\hline
F^{(1)}x^{2}\\
F^{(1)}x^{1}\\
F^{(1)}x^{0}\\\hline
F^{(2)}x^{1}\\
F^{(2)}x^{0}%
\end{array}
\right]  =\dfrac{1}{a_{5}}\left\vert
\begin{array}
[c]{ccccccc}%
a_{5} & a_{4} & a_{3} & a_{2} & a_{1} & a_{0} & \\
& a_{5} & a_{4} & a_{3} & a_{2} & a_{1} & a_{0}\\\hline
5a_{5} & 4a_{4} & 3a_{3} & 2a_{2} & 1a_{1} &  & \\
& 5a_{5} & 4a_{4} & 3a_{3} & 2a_{2} & 1a_{1} & \\
&  & 5a_{5} & 4a_{4} & 3a_{3} & 2a_{2} & 1a_{1}\\\hline
&  & 5\cdot4a_{5} & 4\cdot3a_{4} & 3\cdot2a_{3} & 2\cdot1a_{2} & \\
&  &  & 5\cdot4a_{5} & 4\cdot3a_{4} & 3\cdot2a_{3} & 2\cdot1a_{2}%
\end{array}
\right\vert,
\end{array}
\]
\[%
\begin{array}
[c]{lll}%
D\left(  3,1,1\right)  & = & \operatorname*{dp}\left[
\begin{array}
[c]{l}%
F^{(0)}x^{1}\\
F^{(0)}x^{0}\\\hline
F^{(1)}x^{2}\\
F^{(1)}x^{1}\\
F^{(1)}x^{0}\\\hline
F^{(2)}x^{0}\\\hline
F^{(3)}x^{0}%
\end{array}
\right]  =\dfrac{1}{a_{5}}\left\vert
\begin{array}
[c]{ccccccc}%
a_{5} & a_{4} & a_{3} & a_{2} & a_{1} & a_{0} & \\
& a_{5} & a_{4} & a_{3} & a_{2} & a_{1} & a_{0}\\\hline
5a_{5} & 4a_{4} & 3a_{3} & 2a_{2} & 1a_{1} &  & \\
& 5a_{5} & 4a_{4} & 3a_{3} & 2a_{2} & 1a_{1} & \\
&  & 5a_{5} & 4a_{4} & 3a_{3} & 2a_{2} & 1a_{1}\\\hline
&  &  & 5\cdot4a_{5} & 4\cdot3a_{4} & 3\cdot2a_{3} & 2\cdot1a_{2}\\\hline
&  &  &  & 5\cdot4\cdot3a_{5} & 4\cdot3\cdot2a_{4} & 3\cdot2\cdot1a_{3}%
\end{array}
\right\vert, \\
D\left(  2,2,1\right)  & = & \operatorname*{dp}\left[
\begin{array}
[c]{l}%
F^{(0)}x^{0}\\\hline
F^{(1)}x^{1}\\
F^{(1)}x^{0}\\\hline
F^{(2)}x^{1}\\
F^{(2)}x^{0}\\\hline
F^{(3)}x^{0}%
\end{array}
\right]  =\dfrac{1}{a_{5}}\left\vert
\begin{array}
[c]{cccccc}%
a_{5} & a_{4} & a_{3} & a_{2} & a_{1} & a_{0}\\\hline
5a_{5} & 4a_{4} & 3a_{3} & 2a_{2} & 1a_{1} & \\
& 5a_{5} & 4a_{4} & 3a_{3} & 2a_{2} & 1a_{1}\\\hline
& 5\cdot4a_{5} & 4\cdot3a_{4} & 3\cdot2a_{3} & 2\cdot1a_{2} & \\
&  & 5\cdot4a_{5} & 4\cdot3a_{4} & 3\cdot2a_{3} & 2\cdot1a_{2}\\\hline
&  &  & 5\cdot4\cdot3a_{5} & 4\cdot3\cdot2a_{4} & 3\cdot2\cdot1a_{3}%
\end{array}
\right\vert,
\\
D\left(  2,1,1,1\right)  & = & \operatorname*{dp}\left[
\begin{array}
[c]{l}%
F^{(0)}x^{0}\\\hline
F^{(1)}x^{1}\\
F^{(1)}x^{0}\\\hline
F^{(2)}x^{0}\\\hline
F^{(3)}x^{0}\\\hline
F^{(4)}x^{0}%
\end{array}
\right]  =\dfrac{1}{a_{5}}\left\vert
\begin{array}
[c]{cccccc}%
a_{5} & a_{4} & a_{3} & a_{2} & a_{1} & a_{0}\\\hline
5a_{5} & 4a_{4} & 3a_{3} & 2a_{2} & 1a_{1} & \\
& 5a_{5} & 4a_{4} & 3a_{3} & 2a_{2} & 1a_{1}\\\hline
&  & 5\cdot4a_{5} & 4\cdot3a_{4} & 3\cdot2a_{3} & 2\cdot1a_{2}\\\hline
&  &  & 5\cdot4\cdot3a_{5} & 4\cdot3\cdot2a_{4} & 3\cdot2\cdot1a_{3}\\\hline
&  &  &  & 5\cdot4\cdot3\cdot2a_{5} & 4\cdot3\cdot2\cdot1a_{4}%
\end{array}
\right\vert, \\
D\left(  1,1,1,1,1\right)  & = & \operatorname*{dp}\left[
\begin{array}
[c]{l}%
F^{(1)}x^{0}\\\hline
F^{(2)}x^{0}\\\hline
F^{(3)}x^{0}\\\hline
F^{(4)}x^{0}\\\hline
F^{\left(  5\right)  }x^{0}%
\end{array}
\right]  =\dfrac{1}{a_{5}}\left\vert
\begin{array}
[c]{cccccc}
{5a_{5}} & {4a_{4}} & {3a_{3}} & {2a_{2}} & {1a_{1} } \\\hline
&  5\cdot4a_{5} & 4\cdot3a_{4} & 3\cdot2a_{3} & 2\cdot1a_{2}\\\hline
&  & 5\cdot4\cdot3a_{5} & 4\cdot3\cdot2a_{4} & 3\cdot2\cdot1a_{3}\\\hline
&  &  & 5\cdot4\cdot3\cdot2a_{5} & 4\cdot3\cdot2\cdot1a_{4}\\\hline
&  &  &  & 5\cdot4\cdot3\cdot2\cdot1a_{5}%
\end{array}
\right\vert.
\end{array}
\]
Note that the last one $D\left(  1,1,1,1,1\right)  =5^{5}4^{4}3^{3}2^{2}%
1^{1}a_{5}^{4}$. Since $a_{5}\neq0$, we see that $D\left(  1,1,1,1,1\right)
\neq0$.
\end{example}

{
To present the main theorem, we recall the following definition for the conjugate of $\boldsymbol{\mu}\in \mathcal{M}(n)$.
}
{
\begin{definition}[Conjugate]
Let $\boldsymbol{\mu}=\left( \mu_{1},\ldots,\mu_{m}\right)\in\mathcal{M}(n) $. Then the \emph{%
conjugate} $\overline{\boldsymbol{\mu}}=\left(\overline{\mu}_1,\ldots,\overline{\mu}_s\right) $ of $\boldsymbol{\mu}$ is defined by%
\begin{align*}
s & =\max_{1\le i\le m}\mu_i=\mu_1, \\
\overline{\mu}_{i} & =\#\left\{ \mu_j:\mu_{j}\geq
i\right\} \ \ \ \text{for }i=1,\ldots,s.
\end{align*}
\end{definition}
}

\begin{theorem}
[Main Result]\label{thm:main_result}Let $F=\sum_{i=0}^{n}a_{i}x^{i}%
\ \ $where$\ \ a_{n}\neq0$. Let $\mathcal{M}(n)=\left\{  \boldsymbol{\mu}%
_{0},\boldsymbol{\mu}_{1},\ldots,\boldsymbol{\mu}_{p}\right\}  $ where the
entries are ordered in the lexicographically
{decreasing order in their conjugates $\overline{\boldsymbol{\mu}_{i}}$'s}.
Then we have the following conditions for the multiplicity vectors.
\[
\operatorname*{mult}(F)=\left\{
\begin{array}
[c]{llll}%
{\boldsymbol{\mu}_{0}}, & \text{if } & {D}\left(
{\overline{\boldsymbol{\mu}_{0}}}\right)  & \neq0;\\
\ \vdots & \vdots & \ {\vdots} & \\
{\boldsymbol{\mu}_{p-1},} & \text{else\ if} & {D}\left(
{\overline{\boldsymbol{\mu}_{p-1}}}\right)  & \neq0;\\
{\boldsymbol{\mu}_{p},} & \text{else\ if} & {D}\left(
{\overline{\boldsymbol{\mu}_{p}}}\right)  & \neq0.
\end{array}
\right.
\]
Equivalently,
\[
\operatorname*{mult}(F)={\boldsymbol{\mu}_{i}}\ \ \ \Longleftrightarrow
\ \ \ D\left(  {\overline{\boldsymbol{\mu}_{0}}}\right)
=\cdots=D\left(  {\overline{\boldsymbol{\mu}_{i-1}}}\right)
=0\wedge D\left(  {\overline{\boldsymbol{\mu}_{i}}}\right)
\neq0.
\]

\end{theorem}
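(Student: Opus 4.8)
The goal is to characterize $\operatorname{mult}(F)$ via the vanishing pattern of the discriminants $D(\overline{\boldsymbol\mu})$. The key is to prove a single clean statement about an individual discriminant, namely: for $\boldsymbol\gamma=\overline{\boldsymbol\mu}\in\mathcal M(n)$,
\[
D(\boldsymbol\gamma)\neq 0 \iff \overline{\operatorname{mult}(F)}\ \text{is lexicographically}\ \geq\ \boldsymbol\gamma .
\]
Once this ``key lemma'' is in hand, the theorem follows formally: ordering $\mathcal M(n)$ so that $\overline{\boldsymbol\mu_0}>\overline{\boldsymbol\mu_1}>\cdots$ lexicographically, the chain $D(\overline{\boldsymbol\mu_0})=\cdots=D(\overline{\boldsymbol\mu_{i-1}})=0\wedge D(\overline{\boldsymbol\mu_i})\neq 0$ says exactly that $\overline{\operatorname{mult}(F)}$ fails to be $\geq\overline{\boldsymbol\mu_j}$ for $j<i$ but is $\geq\overline{\boldsymbol\mu_i}$; since conjugation is an involution on $\mathcal M(n)$ and lex order is total, this pins down $\overline{\operatorname{mult}(F)}=\overline{\boldsymbol\mu_i}$, i.e. $\operatorname{mult}(F)=\boldsymbol\mu_i$. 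So the entire content is the key lemma, and I will spend the proof on it.

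\textbf{Step 1: rewrite $D(\boldsymbol\gamma)$ as a ratio of Vandermonde-like determinants in the roots.} Write $F=a_n\prod_{k}(x-\rho_k)$ over $\mathbb C$, treating the roots $\rho_1,\dots,\rho_n$ (with repetition) first as \emph{generic} (distinct) parameters; specialization will come later. The rows of the matrix defining $\operatorname{dp}$ are $F^{(i)}x^{\gamma_i-1},\dots,F^{(i)}x^0$ for $i=0,\dots,s$, and the matrix is $N\times N$ with $N=\sum_i\gamma_i=n$ (using $\gamma_0=\gamma_1-1$ and $\boldsymbol\gamma\in\mathcal M(n)$). The plan is to evaluate this determinant by a ``Cauchy--Binet in disguise'' / generalized-Vandermonde argument: $\operatorname{dp}(P)$ equals (up to the known factor $a_n$) the determinant whose columns are indexed by monomials and whose rows are the polynomials $F^{(i)}x^j$, and expanding each $F^{(i)}$ against the root factorization expresses $D(\boldsymbol\gamma)$ as a ratio
\[
D(\boldsymbol\gamma)=c\cdot\frac{\det M(\boldsymbol\gamma;\rho_1,\dots,\rho_n)}{\det V(\rho_1,\dots,\rho_n)},
\]
where $V$ is the ordinary Vandermonde and $M$ is a \emph{confluent-type} matrix built from the rows $\rho^{(i)}$ (the componentwise derivatives), the number of derivative-rows of order $i$ being controlled by $\gamma$. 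This is where ``divided differences with multiple nodes vs.\ higher derivatives'' enters, exactly as the introduction advertises.

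\textbf{Step 2: specialize the roots to honor $\operatorname{mult}(F)=\boldsymbol\mu$ and read off the confluent Vandermonde.} Now let the roots collide according to $\boldsymbol\mu=(\mu_1\geq\cdots\geq\mu_m)$: group the $n$ generic $\rho$'s into $m$ clusters of sizes $\mu_1,\dots,\mu_m$ around the distinct roots $r_1,\dots,r_m$, and pass to the limit. The ratio in Step 1 is designed so that this limit is finite and computable: the numerator and denominator both become confluent Vandermonde determinants (the denominator's value is the classical $\prod_{j<k}(r_j-r_k)^{\mu_j\mu_k}\cdot(\text{const})$), and the numerator becomes a confluent generalized Vandermonde in which each node $r_j$ contributes a block of $\min(\mu_j,\cdot)$ rows dictated by $\boldsymbol\gamma$ through the derivative structure. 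The crucial combinatorial observation will be that, after simplification, the nonvanishing of $D(\boldsymbol\gamma)$ at this specialization is equivalent to the numerator block-matrix having full rank, and that full rank holds \emph{iff} for every $i$ the number of available derivative directions at order $i$ coming from $\boldsymbol\mu$ meets the demand $\gamma_i$ — which is precisely the inequality $\overline{\mu}_i\geq\gamma_i$ for all $i$, i.e. $\overline{\boldsymbol\mu}\geq\boldsymbol\gamma$ entrywise, and (because the $\gamma$'s and $\overline\mu$'s are partitions with the same sum $n$) entrywise $\geq$ is equivalent to lexicographic $\geq$. Degenerate/boundary cases (when $\boldsymbol\gamma$ requires more rows of some $F^{(i)}$ than $\deg F^{(i)}+1$) should be checked to fall out of the same rank criterion.

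\textbf{Main obstacle.} The delicate point is Step 2: making the limit argument rigorous and, especially, extracting the \emph{exact} rank criterion $\overline\mu_i\ge\gamma_i$ from the structure of the confluent numerator matrix — i.e. showing the numerator determinant is a nonzero polynomial in the $r_j$'s precisely when $\overline{\boldsymbol\mu}\ge\boldsymbol\gamma$, and identically zero otherwise. This requires a careful bookkeeping of which rows $F^{(i)}x^j$ survive the collision (a node of multiplicity $\mu_j$ kills $F,\dots,F^{(\mu_j-1)}$ at $r_j$ but contributes independent data through $F^{(\mu_j)}(r_j),\ldots$), organized so that the surviving rows form a block-triangular pattern whose diagonal blocks are themselves nonsingular confluent Vandermondes in the \emph{distinct} $r_j$'s. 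I expect to set up an auxiliary lemma on the rank of such ``staircase confluent'' matrices and then match its hypothesis term-by-term with the inequalities $\overline\mu_i\ge\gamma_i$; the rest (Step 1 algebraic identity, and the formal deduction of the theorem from the key lemma) is routine by comparison.
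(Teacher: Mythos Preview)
Your overall architecture---express $D(\boldsymbol\gamma)$ as a ratio of determinants in the generic roots, then specialize by confluence (divided differences with repeated nodes) to get a block anti-triangular matrix indexed by $\boldsymbol\gamma$ (rows) and $\overline{\boldsymbol\mu}$ (columns)---is exactly what the paper does. The gap is in the combinatorial criterion you extract at the end of Step~2, where two claims are incorrect.

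First, the assertion ``for partitions of the same $n$, entrywise $\ge$ is equivalent to lexicographic $\ge$'' is false. If $\lambda,\nu\vdash n$ and $\lambda_i\ge\nu_i$ for all $i$, then summing gives $\sum\lambda_i\ge\sum\nu_i$ with equality, forcing $\lambda=\nu$; so entrywise $\ge$ collapses to equality, whereas lex is a genuine total order (e.g.\ $(3,1)\succ_{\mathrm{lex}}(2,2)$, neither entrywise $\ge$ the other). Second, the claimed rank criterion ``$D(\boldsymbol\gamma)\neq 0$ iff $\overline\mu_i\ge\gamma_i$ for all $i$'' is itself wrong: take $n=5$, $\boldsymbol\mu=(3,2)$ so $\overline{\boldsymbol\mu}=(2,2,1)$, and $\boldsymbol\gamma=(2,1,1,1)$. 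Then $\overline\mu_4=0<1=\gamma_4$, so your criterion predicts $D(2,1,1,1)=0$; a direct computation (using the confluent formula you describe) gives $D(2,1,1,1)\neq 0$. The reason is that when the anti-diagonal blocks $M_i$ (of size $\gamma_i\times\overline\mu_i$) are not square, the determinant is \emph{not} the product of their determinants; the below-anti-diagonal entries contribute, so a naive supply/demand count at each level does not decide nonvanishing.

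The fix, and what the paper actually proves, is to abandon the full biconditional and establish only the two one-sided facts that suffice: (A) if $\boldsymbol\gamma=\overline{\boldsymbol\mu}$ then every $M_i$ is square and equals a scaled Vandermonde in distinct $r_j$'s, hence $D(\overline{\boldsymbol\mu})\neq 0$; and (B) if $\boldsymbol\gamma\succ_{\mathrm{lex}}\overline{\boldsymbol\mu}$, let $\ell$ be the first index with $\gamma_\ell>\overline\mu_\ell$; then $\sum_{i\le\ell}\gamma_i>\sum_{i\le\ell}\overline\mu_i$, and repartitioning the top-right corner into a square block of side $\sum_{i\le\ell}\gamma_i$ produces a zero column, so $D(\boldsymbol\gamma)=0$. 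Together (A) and (B) say $\overline{\boldsymbol\mu}$ is the lex-largest $\boldsymbol\gamma$ with $D(\boldsymbol\gamma)\neq 0$, which is equivalent to the theorem; nothing needs to be said about $\boldsymbol\gamma\prec_{\mathrm{lex}}\overline{\boldsymbol\mu}$.
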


\begin{example}
We have the following condition for each multiplicity vector for degree $5$.
\[
\operatorname*{mult}(F)=\left\{
\begin{array}
[c]{llll}%
\left(  1,1,1,1,1\right),  & \text{if } & D\left(  5\right)  & \neq0;\\
(2,1,1,1), & \text{else\ if} & {D}\left(  4,1\right)  & \neq0;\\
(2,2,1), & \text{else\ if} & {D}\left(  3,2\right)  & \neq0;\\
(3,1,1), & \text{else\ if} & {D}\left(  3,1,1\right)  & \neq0;\\
(3,2), & \text{else\ if} & {D}\left(  2,2,1\right)  & \neq0;\\
\left(  4,1\right),  & \text{else\ if} & {D}\left(  2,1,1,1\right)  & \neq0;\\
\left(  5\right),  & \text{else\ if} & {D}\left(  1,1,1,1,1\right)  & \neq0.
\end{array}
\right.
\]
Equivalently, for instance,%
\[
\operatorname*{mult}(F)=\left(  2,2,1\right)  \ \ \ \Longleftrightarrow
\ \ \ D\left(  5\right)  ={D}\left(  4,1\right)  =0\wedge D\left(  3,2\right)
\neq0.
\]

\end{example}

\begin{remark}Note that ${\overline{\boldsymbol{\mu }_{p}}}=(1,\ldots
,1)$ and
\[
{D}\left(  {\overline{\boldsymbol{\mu }_{p}}}\right)  =\frac
{1}{a_{n}}\left|
\begin{array}
[c]{cccc}%
na_{n} & \cdots & \cdots & 1a_{1}\\
& n\left(  n-1\right)  a_{n} & \cdots & 2\cdot1a_{2}\\
&  & \ddots & \vdots~~~\\
&  &  & n\left(  n-1\right)  \cdots1a_{n}%
\end{array}
\right|  =\prod_{i=1}^{n}i^{i}\cdot a_{n}^{n-1}\neq0.
\]
Hence the last condition is always satisfied and there is no need to check the condition.
\end{remark}

\section{Proof of the Main Theorem}

Here is a high level view of the proof. We start with converting ${D}\left(
\boldsymbol{\mu}\right) $ into the equivalent symmetric polynomials in generic
roots (though displayed as a ratio of two determinants) which is easier to
embed the multiplicity information. Then by making use of the connection
between divided difference with multiple nodes and the derivatives of higher
orders at the nodes, we convert the expression in generic roots to that in
distinct roots with multiplicity information integrated. The theorem will be
proved by eliminating the entries in the determinantal expression obtained
from the second stage which may vanish under the given multiplicity structure.
\label{sec:proof}

\subsection{Multiplicity discriminant in terms of roots}

We first understand what the multiplicity discriminants look like in terms of
roots. .

\begin{notation}
$V(\alpha_{1},\ldots,\alpha_{n}):=\left|
\begin{array}
[c]{ccc}%
\alpha_{1}^{n-1} & \cdots & \alpha_{n}^{n-1}\\
\vdots &  & \vdots\\
\alpha_{1}^{0} & \cdots & \alpha_{n}^{0}%
\end{array}
\right| $.
\end{notation}

\begin{lemma}
[Multiplicity discriminant in generic roots]\label{lem:D_mu_in_roots} Let
$F=a_{n}(x-\alpha_{1})\cdots(x-\alpha_{n})$ and $\boldsymbol{\gamma}%
=(\gamma_{1},\ldots,\gamma_{s})\in\mathcal{M}(n) $. Then
\begin{equation}
D(\boldsymbol{\gamma})=\dfrac{a_{n}^{\gamma_{1}-2}\cdot\left\vert
\begin{array}
[c]{lcl}%
F^{(1)}(\alpha_{1})\alpha_{1}^{\gamma_{1}-1} & \cdots & F^{(1)}(\alpha
_{n})\alpha_{n}^{\gamma_{1}-1}\\
~~~~~~~\vdots &  & ~~~~~~~\vdots\\
F^{(1)}(\alpha_{1})\alpha_{1}^{0} & \cdots & F^{(1)}(\alpha_{n})\alpha_{n}%
^{0}\\\hline
~~~~~~~\vdots &  & ~~~~~~~\vdots\\\hline
F^{(s)}(\alpha_{1})\alpha_{1}^{\gamma_{s}-1} & \cdots & F^{(s)}(\alpha
_{n})\alpha_{n}^{\gamma_{s}-1}\\
~~~~~~~\vdots &  & ~~~~~~~\vdots\\
F^{(s)}(\alpha_{1})\alpha_{1}^{0} & \cdots & F^{(s)}(\alpha_{n})\alpha_{n}^{0}%
\end{array}
\right\vert }{V(\alpha_{1},\ldots,\alpha_{n})}.\label{eqs:discr_in_roots}%
\end{equation}

\end{lemma}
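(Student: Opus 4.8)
\emph{Proof plan.} The plan is to realise the square coefficient matrix $C(P)$ underlying $D(\boldsymbol{\gamma})$ as a product of a (suitable completion of a) Vandermonde matrix with a matrix that, up to an explicit scalar, is the numerator of~(\ref{eqs:discr_in_roots}), and then to read off the identity by taking determinants. Write $\gamma_0=\gamma_1-1$ and $N=n+\gamma_0=n+\gamma_1-1$, so that $P$ is a column of $N$ polynomials each of degree at most $N-1$ and $C(P)$ is $N\times N$. First I would introduce $\gamma_0$ fresh indeterminates $\beta_1,\dots,\beta_{\gamma_0}$ and let $W$ be the $N\times N$ matrix whose $j$-th column is $\big(z_j^{N-1},\dots,z_j^{0}\big)^{\top}$, where $z_j=\alpha_j$ for $1\le j\le n$ and $z_{n+m}=\beta_m$ for $1\le m\le\gamma_0$; then $\det W=V(\alpha_1,\dots,\alpha_n,\beta_1,\dots,\beta_{\gamma_0})$. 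Since the $i$-th row of $C(P)$ is the coefficient vector of the $i$-th entry of $P$ and the $j$-th column of $W$ is the vector of powers of $z_j$, the $(i,j)$ entry of $C(P)W$ is $P_i(z_j)$.

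Next I would use the block structure of $P$. Its first $\gamma_0$ rows are $F x^{\gamma_0-1},\dots,F x^{0}$; evaluated at $z_j=\alpha_j$ with $j\le n$ these vanish because $F(\alpha_j)=0$, while at $z_{n+m}=\beta_m$ they equal $F(\beta_m)\beta_m^{\gamma_0-i}$. Its remaining $n$ rows are $F^{(1)}x^{\gamma_1-1},\dots,F^{(s)}x^{0}$; evaluated at the $\alpha_j$ ($j\le n$) they form exactly the $n\times n$ matrix in the numerator of~(\ref{eqs:discr_in_roots}), which I denote $M$, and at the $\beta_m$ they form some $n\times\gamma_0$ block. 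Hence, after moving the last $\gamma_0$ columns of $C(P)W$ to the front (a column permutation of sign $(-1)^{\gamma_0 n}$), the matrix becomes block lower triangular with square diagonal blocks $X=\big(F(\beta_m)\beta_m^{\gamma_0-i}\big)_{1\le i,m\le\gamma_0}$ and $M$, so that $\det\big(C(P)W\big)=(-1)^{\gamma_0 n}\det X\cdot\det M$. Pulling $F(\beta_m)$ out of the $m$-th column of $X$ gives $\det X=\big(\prod_{m=1}^{\gamma_0}F(\beta_m)\big)\,V(\beta_1,\dots,\beta_{\gamma_0})$.

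Combining $\det C(P)\cdot\det W=\det(C(P)W)$ with these two computations yields
\[
\det C(P)\cdot V(\alpha_1,\dots,\alpha_n,\beta_1,\dots,\beta_{\gamma_0})=(-1)^{\gamma_0 n}\Big(\prod_{m=1}^{\gamma_0}F(\beta_m)\Big)V(\beta_1,\dots,\beta_{\gamma_0})\det M .
\]
Then I would substitute the Vandermonde factorisation $V(\alpha_1,\dots,\alpha_n,\beta_1,\dots,\beta_{\gamma_0})=V(\alpha_1,\dots,\alpha_n)\cdot\prod_{i=1}^{n}\prod_{m=1}^{\gamma_0}(\alpha_i-\beta_m)\cdot V(\beta_1,\dots,\beta_{\gamma_0})$ together with $F(\beta_m)=a_n\prod_{i=1}^{n}(\beta_m-\alpha_i)$: the factor $V(\beta_1,\dots,\beta_{\gamma_0})$ cancels, the factor $\prod_{i,m}(\alpha_i-\beta_m)$ cancels against $\prod_{m}\prod_i(\beta_m-\alpha_i)$ up to the sign $(-1)^{\gamma_0 n}$, and one is left with $\det C(P)\cdot V(\alpha_1,\dots,\alpha_n)=a_n^{\gamma_0}\det M$. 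Dividing by $a_n\,V(\alpha_1,\dots,\alpha_n)$ (legitimate because the $\alpha_i$ are generic, so $V(\alpha_1,\dots,\alpha_n)\neq0$) gives $D(\boldsymbol{\gamma})=\frac{1}{a_n}\det C(P)=a_n^{\gamma_0-1}\det M/V(\alpha_1,\dots,\alpha_n)=a_n^{\gamma_1-2}\det M/V(\alpha_1,\dots,\alpha_n)$, which is precisely~(\ref{eqs:discr_in_roots}); the degenerate case $\gamma_1=1$ needs no auxiliary points and follows directly from $C(P)W=M$.

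The hard part will be the bookkeeping in the last step: tracking the sign of the column permutation, pairing off the Vandermonde factors with the $F(\beta_m)$'s, and checking that the whole $\beta$-dependence cancels. The observation that makes this clean and rigorous is that the auxiliary points may be kept as indeterminates, so that every equality above is an identity in a polynomial ring (with the $a_i$ the elementary symmetric functions of the $\alpha_i$), and the cancellations are cancellations of nonzero polynomials. Everything else reduces to the routine fact that right-multiplying a coefficient matrix by a Vandermonde-type matrix evaluates the corresponding polynomials at the nodes.
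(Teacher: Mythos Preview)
Your argument is correct, and it takes a genuinely different route from the paper's own proof. The paper proceeds in two reduction steps: first it pads the top of the matrix with the additional rows $F^{(0)}x^{n-2},\ldots,F^{(0)}x^{\gamma_1-1}$ so that there are exactly $n-1$ rows of $F^{(0)}$-shifts (picking up the factor $a_n^{\gamma_1-n}$ from the resulting upper-triangular block), and then it invokes a lemma from \cite{2021_Hong_Yang} stating that for any $G_1,\ldots,G_n\in\mathbb{C}[x]_{2n-2}$,
\[
\operatorname*{dp}\big[F^{(0)}x^{n-2},\ldots,F^{(0)}x^{0},G_1,\ldots,G_n\big]^{\top}
=\frac{a_n^{n-1}\,\big|G_i(\alpha_j)\big|_{1\le i,j\le n}}{V(\alpha_1,\ldots,\alpha_n)},
\]
specialising the $G_i$ to the shifted derivatives. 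You instead keep the original $(n+\gamma_0)\times(n+\gamma_0)$ coefficient matrix and right-multiply by a Vandermonde at $\alpha_1,\ldots,\alpha_n$ together with $\gamma_0$ fresh indeterminates $\beta_1,\ldots,\beta_{\gamma_0}$; the vanishing $F(\alpha_j)=0$ then produces the block-triangular structure directly, and the auxiliary $\beta$-dependence cancels against the factored form of $F(\beta_m)$ and the extended Vandermonde. Your approach is self-contained (no external citation needed) and avoids the row-padding trick; the paper's approach is shorter on the page precisely because it offloads the analytic content to the cited lemma. Both yield the same identity $\det C(P)\cdot V(\alpha)=a_n^{\gamma_0}\det M$, so the difference is one of packaging rather than of depth.
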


\begin{proof}
\

\begin{enumerate}
\item Since $\gamma_{1}\geq\cdots\geq\gamma_{s}$ and $\gamma_{0}=\gamma_{1}%
-1$, we have
\[
\deg(F^{(0)}x^{n-2})>\cdots>\deg(F^{(0)}x^{\gamma_{1}-1})>\max({\deg(F^{(0)}%
x^{\gamma_{0}-1}),\deg(F^{(1)}x^{\gamma_{1}-1}),\ldots,\deg(F^{(s)}x^{\gamma_{s}-1})}).
\]
Thus
\[
D(\boldsymbol{\gamma})=\frac{1}{a_{n}}\operatorname*{dp}\left[
\begin{array}
[c]{l}%
F^{(0)}x^{\gamma_{1}-2}\\
~~~~\vdots\\
F^{(0)}x^{0}\\\hline
F^{(1)}x^{\gamma_{1}-1}\\
~~~~\vdots\\
F^{(1)}x^{0}\\\hline
~~~~\vdots\\\hline
F^{(s)}x^{\gamma_{s}-1}\\
~~~~\vdots\\
F^{(s)}x^{0}%
\end{array}
\right]  =\frac{1}{a_{n}}\cdot a_{n}^{\gamma_{1}-n}\operatorname*{dp}\left[
\begin{array}
[c]{l}%
F^{(0)}x^{n-2}\\
~~~~\vdots\\
F^{(0)}x^{\gamma_{1}-1}\\
F^{(0)}x^{\gamma_{1}-2}\\
~~~~\vdots\\
F^{(0)}x^{0}\\\hline
F^{(1)}x^{\gamma_{1}-1}\\
~~~~\vdots\\
F^{(1)}x^{0}\\\hline
~~~~\vdots\\\hline
F^{(s)}x^{\gamma_{s}-1}\\
~~~~\vdots\\
F^{(s)}x^{0}%
\end{array}
\right]  =a_{n}^{\gamma_{1}-n-1}\operatorname*{dp}\left[
\begin{array}
[c]{l}%
F^{(0)}x^{n-2}\\
~~~~\vdots\\
F^{(0)}x^{0}\\\hline
F^{(1)}x^{\gamma_{1}-1}\\
~~~~\vdots\\
F^{(1)}x^{0}\\\hline
~~~~\vdots\\\hline
F^{(s)}x^{\gamma_{s}-1}\\
~~~~\vdots\\
F^{(s)}x^{0}%
\end{array}
\right].
\]

\item Now we recall the following result from \cite{2021_Hong_Yang} which is
the key for proving the lemma. Let $G_{1},\ldots,G_{n}\in\mathbb{C}\left[
x\right]  _{2n-2}$ where $\mathbb{C}\left[  x\right]  _{2n-2}$ consists of all
the polynomials in $x$ with degree no greater than $2n-2$. Then%

\begin{equation}
\label{eqs:dp_in_roots}\operatorname*{dp}\left[
\begin{array}
[c]{l}%
F^{(0)}x^{n-2}\\
~~~~\vdots\\
F^{(0)}x^{0}\\
~~~~G_{1}\\
~~~~\vdots\\
~~~~G_{n}%
\end{array}
\right]  =\dfrac{a_{n}^{n-1}\cdot\left\vert
\begin{array}
[c]{ccc}%
G_{1}(\alpha_{1}) & \cdots & G_{1}(\alpha_{n})\\
\vdots &  & \vdots\\
G_{n}(\alpha_{1}) & \cdots & G_{n}\left(  \alpha_{n}\right)
\end{array}
\right\vert }{V(\alpha_{1},\ldots,\alpha_{n})}.%
\end{equation}

\item After specializing $G_{1},\ldots,G_{n}$ in \eqref{eqs:dp_in_roots} with
$F^{(1)}x^{\gamma_{1}-1},\ldots,F^{(1)}x^{0},\ldots,F^{(s)}x^{\gamma_{s}%
-1},\ldots,F^{(s)}x^{0}$, respectively, we have
\[
D(\boldsymbol{\gamma})=a_{n}^{\gamma_{1}-n-1}\cdot\dfrac{a_{n}^{n-1}%
\cdot\left\vert
\begin{array}
[c]{lcl}%
\left(  F^{(1)}x^{\gamma_{1}-1}\right)  (\alpha_{1}) & \cdots & \left(
F^{(1)}x^{\gamma_{1}-1}\right)  (\alpha_{n})\\
~~~~~~~\vdots &  & ~~~~~~~\vdots\\
\left(  F^{(1)}x^{0}\right)  (\alpha_{1}) & \cdots & \left(  F^{(1)}%
x^{0}\right)  (\alpha_{n})\\\hline
~~~~~~~\vdots &  & ~~~~~~~\vdots\\\hline
\left(  F^{(s)}x^{\gamma_{s}-1}\right)  (\alpha_{1}) & \cdots & \left(
F^{(s)}x^{\gamma_{s}-1}\right)  (\alpha_{n})\\
~~~~~~~\vdots &  & ~~~~~~~\vdots\\
\left(  F^{(s)}x^{0}\right)  (\alpha_{1}) & \cdots & \left(  F^{(s)}%
x^{0}\right)  (\alpha_{n})
\end{array}
\right\vert }{V(\alpha_{1},\ldots,\alpha_{n})}%
\]
which can be easily simplified into \eqref{eqs:discr_in_roots}.
\end{enumerate}
\end{proof}

\begin{remark}
It is very important to note that the right hand side is a polynomial function
in $\alpha_{1},\ldots,\alpha_{n}$, even though written as a rational function,
since the numerator is exactly divisible by the denominator. Hence the above
definition should be read as follows:

\begin{enumerate}
\item Treating $\alpha_{1},\ldots,\alpha_{n}$ as distinct indeterminates,
carry out the exact division obtaining a polynomial.

\item Treating $\alpha_{1},\ldots,\alpha_{n}$ as numbers, evaluate the
resulting polynomial.
\end{enumerate}
\end{remark}

\begin{lemma}
[Multiplicity discriminant in multiple roots]\label{lem:D_mu_in_mroots} Let
$F$ be of degree $n$ with $m$ distinct roots $r_{1},\ldots,r_{m}$, of
multiplicities $\mu_{1},\ldots,\mu_{m}$, that is $\mu_{1}+\cdots+\mu_{m}=n$.
Let $\boldsymbol{\gamma}=\left(  \gamma_{1},\ldots,\gamma_{s}\right)
\in\Gamma(n)$. Then we have
\begin{equation}
D(\boldsymbol{\gamma})=\frac{c\cdot\left\vert \setlength{\arraycolsep}{2.0pt}%
\begin{array}
[c]{lcl|c|lcl}%
({F^{(1)}x^{\gamma_{1}-1}})^{(0)}(r_{1}) & \cdots & ({F^{(1)}x^{\gamma_{1}-1}%
})^{(\mu_{1}-1)}(r_{1}) & \cdots\cdots & ({F^{(1)}x^{\gamma_{1}-1}}%
)^{(0)}(r_{m}) & \cdots & ({F^{(1)}x^{\gamma_{1}-1}})^{(\mu_{m}-1)}(r_{m})\\
~~~~~~~~\vdots &  & ~~~~~~~~\vdots &  & ~~~~~~~~\vdots &  & ~~~~~~~~\vdots\\
({F^{(1)}x^{0}})^{(0)}(r_{1}) & \cdots & ({F^{(1)}x^{0}})^{(\mu_{1}-1)}%
(r_{1}) & \cdots\cdots & ({F^{(1)}x^{0}})^{(0)}(r_{m}) & \cdots &
({F^{(1)}x^{0}})^{(\mu_{m}-1)}(r_{m})\\\hline
~~~~~~~~\vdots &  & ~~~~~~~~\vdots &  & ~~~~~~~~\vdots &  & ~~~~~~~~\vdots
\\\hline
({F^{(s)}}x^{\gamma_{s}-1})^{(0)}(r_{1}) & \cdots & ({F}^{(s)}x^{\gamma_{s}%
-1})^{(\mu_{1}-1)}(r_{1}) & \cdots\cdots & ({F^{(s)}}x^{\gamma_{s}-1}%
)^{(0)}(r_{m}) & \cdots & ({F}^{(s)}x^{\gamma_{s}-1})^{(\mu_{m}-1)}(r_{m})\\
~~~~~~~~\vdots &  & ~~~~~~~~\vdots &  & ~~~~~~~~\vdots &  & ~~~~~~~~\vdots\\
({F^{(s)}}x^{0})^{(0)}(r_{1}) & \cdots & ({F^{(s)}}x^{0})^{(\mu_{1}-1)}%
(r_{1}) & \cdots\cdots & ({F^{(s)}}x^{0})^{(0)}(r_{m}) & \cdots & ({F^{(s)}%
}x^{0})^{(\mu_{m}-1)}(r_{m})
\end{array}
\right\vert }{\prod\limits_{1\leq i<j\leq m}(r_{i}-r_{j})^{\mu_{i}\mu_{j}}%
}\label{eq:div_diff_det}%
\end{equation}
where $c=\pm1\Big/\left(  \prod_{i=1}^{m}\prod_{j=0}^{\mu_{i}-1}j!\right)
\cdot a_{n}^{\gamma_{1}-2}$.
\end{lemma}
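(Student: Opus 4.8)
The plan is to pass from the generic-root expression in Lemma~\ref{lem:D_mu_in_roots} to the multiple-root expression by specializing $\alpha_1,\ldots,\alpha_n$ to the repeated values $r_1$ (taken $\mu_1$ times), $\ldots$, $r_m$ (taken $\mu_m$ times), and then handling the resulting $0/0$ form by the standard confluent-Vandermonde (Hermite interpolation) device. Concretely, I would first recall that the numerator determinant in \eqref{eqs:discr_in_roots} and the Vandermonde $V(\alpha_1,\ldots,\alpha_n)$ in the denominator are both polynomials in the $\alpha_i$'s, and that the quotient is a genuine polynomial, hence can be evaluated by continuity at any specialization. So I would introduce perturbed variables clustering around each $r_k$ and take the confluent limit.

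The key algebraic tool is the following: if $g_1,\ldots,g_n$ are polynomials and $\alpha_1,\ldots,\alpha_n$ are specialized so that $r_k$ appears in positions forming a block of size $\mu_k$, then applying to each such block the elementary column operations that turn a Vandermonde block into a ``divided-difference/Taylor'' block, one gets
\[
\lim \frac{\det\bigl[g_i(\alpha_j)\bigr]_{i,j}}{\prod_{i<j}(\alpha_i-\alpha_j)}
= \frac{\det\Bigl[\,g_i^{(\ell)}(r_k)\,\Bigr]}{\prod_{1\le k<k'\le m}(r_k-r_{k'})^{\mu_k\mu_{k'}}\cdot \prod_{k=1}^m\prod_{\ell=0}^{\mu_k-1}\ell!}\cdot(\pm 1),
\]
where on the right the columns are indexed by pairs $(k,\ell)$ with $1\le k\le m$, $0\le\ell\le\mu_k-1$. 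This is exactly the confluent-Vandermonde evaluation: within each block one replaces the monomial basis $1,x,\ldots$ by (a triangular change of basis to) $1,(x-r_k),\ldots,(x-r_k)^{\mu_k-1}/(\mu_k-1)!$, which turns the $j$-th column of that block into $g_i^{(j-1)}(r_k)/(j-1)!$ and simultaneously factors the Vandermonde denominator into the intra-block product $\prod_{\ell<\ell'}\bigl((\text{$\ell$-th node})-(\text{$\ell'$-th node})\bigr)\to 0$ times the inter-block product $\prod_{k<k'}(r_k-r_{k'})^{\mu_k\mu_{k'}}$; the intra-block factors cancel against the corresponding minors so that the limit is finite, producing the denominators $\ell!$ and the sign. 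Applying this with $(g_1,\ldots,g_n)=(F^{(1)}x^{\gamma_1-1},\ldots,F^{(1)}x^0,\ldots,F^{(s)}x^{\gamma_s-1},\ldots,F^{(s)}x^0)$ and combining the $\pm 1$ and the $\prod\ell!$ with the prefactor $a_n^{\gamma_1-2}$ from \eqref{eqs:discr_in_roots} yields precisely \eqref{eq:div_diff_det} with the stated constant $c=\pm1\big/\bigl(\prod_{i=1}^m\prod_{j=0}^{\mu_i-1}j!\bigr)\cdot a_n^{\gamma_1-2}$.

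The main obstacle I expect is being careful with \emph{why the confluent limit exists and equals that ratio}, i.e., making the cancellation of the vanishing intra-block Vandermonde factors against the vanishing minors rigorous rather than formal. The cleanest way is to avoid limits altogether: work over $\mathbb{C}[r_1,\ldots,r_m]$ (or a suitable polynomial ring), perform the exact column operations on the numerator determinant of \eqref{eqs:discr_in_roots} block by block, and perform the \emph{matching} operations on the Vandermonde so the division becomes an identity of polynomials before any specialization. One should also double-check the bookkeeping of the exponent $a_n^{\gamma_1-2}$ (it is untouched by these manipulations since it sits outside the determinant) and pin down the sign $\pm1$ from the permutation that reorders columns into the block pattern; since the theorem only records the sign as $\pm1$, a precise tracking is not needed, but I would note that it depends only on $n$ and $\boldsymbol{\mu}$, not on $F$. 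A minor point to verify is that all entries $(F^{(i)}x^{\gamma_i-1})^{(\ell)}(r_k)$ make sense and that the degree bound $\deg(F^{(i)}x^{\gamma_i-1})\le 2n-2$ needed to invoke \eqref{eqs:dp_in_roots} in the proof of Lemma~\ref{lem:D_mu_in_roots} is inherited here; this holds because $\gamma_1-1\le n-1$ and $\deg F^{(1)}\le n-1$.
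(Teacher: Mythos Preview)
Your proposal is correct and follows essentially the same route as the paper: start from the generic-root formula of Lemma~\ref{lem:D_mu_in_roots}, perform within each block the column operations that cancel the intra-block Vandermonde factors (so the quotient is a genuine polynomial identity before specialization), then specialize and read off the derivatives and the factorials. The paper carries this out explicitly via divided differences---replacing the columns $\boldsymbol{F}(\alpha_1),\boldsymbol{F}(\alpha_2),\ldots$ by $\boldsymbol{F}[\alpha_1],\boldsymbol{F}[\alpha_1,\alpha_2],\ldots$ and then invoking $P[\underbrace{r,\ldots,r}_{k}]=P^{(k-1)}(r)/(k-1)!$---which is precisely the algebraic (non-limiting) implementation you recommend in your final paragraph; your phrasing in terms of a ``Taylor basis change'' is a bit imprecise (the relevant operations are on columns, i.e., on evaluation points, not on the row polynomials), but the divided-difference description you also mention is exactly what is needed.
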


\begin{proof}
\

\begin{enumerate}
\item Let $F=a_{n}(x-\alpha_{1})\cdots(x-\alpha_{n})$. When $\alpha_{1}%
,\ldots,\alpha_{n}$ are treated as numbers, without loss of generality, we may
assume that $\alpha_{1},\ldots,\alpha_{n}$ are grouped into $m$ sets as
follows:
\[%
\begin{array}
[c]{llccccc}%
\mathcal{S}_{1}:= & \{\alpha_{1} & \cdots & \cdots & \cdots & \cdots &
\alpha_{\mu_{1}}\},\\
\mathcal{S}_{2}:= & \{\alpha_{\mu_{1}+1} & \cdots & \cdots & \cdots &
\alpha_{\mu_{1}+\mu_{2}}\}, & \\
\qquad\vdots &  &  &  &  &  & \\
\mathcal{S}_{m}:= & \{\alpha_{\mu_{1}+\cdots+\mu_{m-1}+1} & \cdots &
\alpha_{\mu_{1}+\cdots+\mu_{m-1}+\mu_{m}}\}. &  &  &
\end{array}
\]
where elements in $\mathcal{S}_{i}$ are all equal to $r_{i}$.

\item Recall that
\[
{D}(\boldsymbol{\gamma})=a_{n}^{\gamma_{1}-2}\cdot{\left\vert
\begin{array}
[c]{lcl}%
({F}^{(1)}x^{\gamma_{1}-1})(\alpha_{1}) & \cdots & ({F}^{(1)}x^{\gamma_{1}%
-1})(\alpha_{n})\\
~~~~~~\vdots &  & ~~~~~~\vdots\\
({F}^{(1)}x^{0})(\alpha_{1}) & \cdots & ({F}^{(1)}x^{0})(\alpha_{n})\\\hline
~~~~~~\vdots &  & ~~~~~~\vdots\\
~~~~~~\vdots &  & ~~~~~~\vdots\\\hline
({F}^{(s)}x^{\gamma_{s}-1})(\alpha_{1}) & \cdots & ({F}^{(s)}x^{\gamma_{s}%
-1})(\alpha_{n})\\
~~~~~~\vdots &  & ~~~~~~\vdots\\
({F}^{(s)}x^{0})(\alpha_{1}) & \cdots & ({F}^{(s)}x^{0})(\alpha_{n})
\end{array}
\right\vert \big/V(\alpha_{1},\ldots,\alpha_{n}).}%
\]
Next we will treat $\alpha_{1},\ldots,\alpha_{n}$ as indeterminates and carry
out the exact division so that difference between the collapsed $\alpha_{i}$'s
do not appear in the denominator.

\item For the sake of simplicity, we use the follow shorthand notion:
\[
\boldsymbol{F}:=\left[  F^{(1)}x^{\gamma_{1}-1},\ldots,F^{(1)}x^{0}%
,\ldots,F^{(s)}x^{\gamma_{s}-1},\ldots,F^{(s)}x^{0}\right]  ^{T}.%
\]

\item Let $P[x_{1},\ldots,x_{i}]$ denote the $(i-1)$th divided difference of
$P\in\mathbb{C}[x]$ at $x_{1},\ldots,x_{i}$ {defined recursively as follows:
\[
P[x_{1},\ldots,x_{i}]=\left\{
\begin{array}{ll}
P(x_1),&\text{if}\ i=1;\\[3pt]
\dfrac{P[x_1,\ldots,x_{i-2},x_i]-P[x_1,\ldots,x_{i-2},x_{i-1}]}{x_i-x_{i-1}},&\text{if}\ i>1.
\end{array}
\right.
\]}
Let
\begin{align*}
\boldsymbol{F}[\alpha_{1},\ldots,\alpha_{i}]:= &  \left[  (F^{(1)}%
x^{\gamma_{1}-1})[\alpha_{1},\ldots,\alpha_{i}],\ldots,(F^{(1)}x^{0}%
)[\alpha_{1},\ldots,\alpha_{i}],\right. \\
&  \left.  \quad\ldots\ldots,(F^{(s)}x^{\gamma_{s}-1})[\alpha_{1}%
,\ldots,\alpha_{i}],\ldots,(F^{(s)}x^{0})[\alpha_{1},\ldots,\alpha
_{i}]\right]  ^{T}.%
\end{align*}

\item It follows that
\begin{align*}
D(\boldsymbol{\gamma}) &  =a_{n}^{\gamma_{1}-2}\cdot\frac{\left\vert
\begin{array}
[c]{ccc}%
\boldsymbol{F}(\alpha_{1}) & \cdots & \boldsymbol{F}(\alpha_{n})
\end{array}
\right\vert }{V(\alpha_{1},\ldots,\alpha_{n})}\\[5pt]
&  =a_{n}^{\gamma_{1}-2}\cdot\frac{\left\vert
\begin{array}
[c]{cccccc}%
\boldsymbol{F}[\alpha_{1}] & \cdots & \boldsymbol{F}[\alpha_{\mu_{1}}] &
\boldsymbol{F}[\alpha_{\mu_{1}+1}] & \cdots & \boldsymbol{F}[\alpha_{n}]
\end{array}
\right\vert }{\prod\limits_{\substack{\alpha_{i},\alpha_{j}\in\mathcal{S}_{1}
\\j-i>0}}(\alpha_{i}-\alpha_{j})\prod\limits_{\substack{\alpha_{i},\alpha
_{j}\notin\mathcal{S}_{1} \\j-i>0}}(\alpha_{i}-\alpha_{j})\prod
\limits_{\substack{\alpha_{i}\in\mathcal{S}_{1} \\\alpha_{j}\notin%
\mathcal{S}_{1}}}(\alpha_{i}-\alpha_{j})}\\[8pt]
&  =\pm a_{n}^{\gamma_{1}-2}\cdot\frac{\left\vert
\begin{array}
[c]{ccccccc}%
\boldsymbol{F}[\alpha_{1}] & \boldsymbol{F}[\alpha_{1},\alpha_{2}] & \cdots &
\boldsymbol{F}[\alpha_{\mu_{1}-1},\alpha_{\mu_{1}}] & \boldsymbol{F}%
[\alpha_{\mu_{1}+1}] & \cdots & \boldsymbol{F}[\alpha_{n}]
\end{array}
\right\vert }{\prod\limits_{\substack{\alpha_{i},\alpha_{j}\in\mathcal{S}_{1}
\\j-i>1}}(\alpha_{i}-\alpha_{j})\prod\limits_{\substack{\alpha_{i},\alpha
_{j}\notin\mathcal{S}_{1} \\j-i>0}}(\alpha_{i}-\alpha_{j})\prod
\limits_{\substack{\alpha_{i}\in\mathcal{S}_{1} \\\alpha_{j}\notin%
\mathcal{S}_{1}}}(\alpha_{i}-\alpha_{j})}\\[10pt]
&  =\pm a_{n}^{\gamma_{1}-2}\cdot\frac{\left\vert
\begin{array}
[c]{cccccccc}%
\boldsymbol{F}[\alpha_{1}] & \boldsymbol{F}[\alpha_{1},\alpha_{2}] &
\boldsymbol{F}[\alpha_{1},\alpha_{2},\alpha_{3}] & \cdots & \boldsymbol{F}%
[\alpha_{\mu_{1}-2},\alpha_{\mu_{1}-1},\alpha_{\mu_{1}}] & \boldsymbol{F}%
[\alpha_{\mu_{1}+1}] & \cdots & \boldsymbol{F}[\alpha_{n}]
\end{array}
\right\vert }{\prod\limits_{\substack{\alpha_{i},\alpha_{j}\in\mathcal{S}_{1}
\\j-i>2}}(\alpha_{i}-\alpha_{j})\prod\limits_{\substack{\alpha_{i},\alpha
_{j}\notin\mathcal{S}_{1} \\j-i>0}}(\alpha_{i}-\alpha_{j})\prod
\limits_{\substack{\alpha_{i}\in\mathcal{S}_{1} \\\alpha_{j}\notin%
\mathcal{S}_{1}}}(\alpha_{i}-\alpha_{j})}\\
&  ~\vdots\\
&  ~\\
&  =\pm a_{n}^{\gamma_{1}-2}\cdot\frac{\left\vert
\begin{array}
[c]{ccccccc}%
\boldsymbol{F}[\alpha_{1}] & \boldsymbol{F}[\alpha_{1},\alpha_{2}] & \cdots &
\boldsymbol{F}[\alpha_{1},\ldots,\alpha_{\mu_{1}}] & \boldsymbol{F}%
(\alpha_{\mu_{1}+1}) & \cdots & \boldsymbol{F}(\alpha_{n})
\end{array}
\right\vert }{\prod\limits_{\substack{\alpha_{i},\alpha_{j}\notin%
\mathcal{S}_{1} \\j-i>0}}(\alpha_{i}-\alpha_{j})\prod\limits_{\substack{\alpha
_{i}\in\mathcal{S}_{1} \\\alpha_{j}\notin\mathcal{S}_{1}}}(\alpha_{i}%
-\alpha_{j})}.%
\end{align*}

\item Repeating the procedure for $\alpha_{j}$'s in each $\mathcal{S}_{i}$ for
$i=2,\ldots,m$ successively, we get
\[
D(\boldsymbol{\gamma})=\pm a_{n}^{\gamma_{1}-2}\cdot\frac{\left\vert
\begin{array}
[c]{ccc|cc|ccc}%
\boldsymbol{F}[\alpha_{1}] & \cdots & \boldsymbol{F}[\alpha_{1},\ldots
,\alpha_{\mu_{1}}] & \cdots & \cdots & \boldsymbol{F}[\alpha_{\mu_{1}%
+\cdots+\mu_{m-1}+1}] & \cdots & \boldsymbol{F}[\alpha_{\mu_{1}+\cdots
+\mu_{m-1}+1},\ldots,\alpha_{n}]
\end{array}
\right\vert }{\prod\limits_{1\leq i<j\leq m}\prod\limits_{\substack{\alpha
_{p}\in\mathcal{S}_{i} \\\alpha_{q}\in\mathcal{S}_{j}}}(\alpha_{p}-\alpha
_{q})}.%
\]

\item Now we substitute $\alpha_{1}=\cdots=\alpha_{\mu_{1}}=r_{1}$, \ldots,
$\alpha_{\mu_{1}+\cdots+\mu_{m-1}+1}=\cdots=\alpha_{n}=r_{m}$ into
$D(\boldsymbol{\gamma})$ and obtain
\begin{equation}
\label{eq:D_in_ddiv}D(\boldsymbol{\gamma})= \pm a_{n}^{\gamma_{1}-2}\cdot
\frac{\left\vert
\begin{array}
[c]{ccc|cc|ccc}%
\boldsymbol{F}[r_{1}] & \cdots & \boldsymbol{F}[r_{1},\ldots,r_{1}] & \cdots &
\cdots & \boldsymbol{F}[r_{m}] & \cdots & \boldsymbol{F}[r_{m},\ldots,r_{m}]
\end{array}
\right\vert }{\prod\limits_{1\le i<j\le m}(r_{i}-r_{j})^{\mu_{i}\mu_{j}}}.%
\end{equation}

\item {By \cite[Equation (2.1.5.7a)]{2002_Stoer_Bulirsch},} for any given polynomial $P\in\mathbb{C}[x]$,
\[
P[\underbrace{r_{i},\ldots,r_{i}}_{k\ r_{i}\text{'s}}]=\frac{P^{(k-1)}(r_{i}%
)}{(k-1)!}.%
\]
Hence
\begin{equation}
\label{eq:ddiv}\boldsymbol{F}[\underbrace{r_{i},\ldots,r_{i}}_{k\ r_{i}%
\text{'s}}]= \left[  \frac{({F^{(1)}x^{\gamma_{1}-1}})^{(k-1)}(r_{i})}%
{(k-1)!},\ldots,\frac{({F^{(1)}x^{0}})^{(k-1)}(r_{i})}{(k-1)!},\ldots
,\frac{({F}^{(s)}x^{\gamma_{s}-1})^{(k-1)}(r_{i})}{(k-1)!},\ldots,\frac
{({F}^{(s)}x^{0})^{(k-1)}(r_{i})}{(k-1)!}\right]  ^{T}.%
\end{equation}

\item Substituting \eqref{eq:ddiv} into \eqref{eq:D_in_ddiv} , we have
\begin{align*}
D(\boldsymbol{\gamma}) &  =\pm a_{n}^{\gamma_{1}-2}\cdot\frac{\left|
\setlength{\arraycolsep}{2.5pt}%
\begin{array}
[c]{lcl|c|lcl}%
\frac{({F^{(1)}x^{\gamma_{1}-1}})^{(0)}(r_{1})}{0!} & \cdots & \frac
{({F^{(1)}x^{\gamma_{1}-1}})^{(\mu_{1}-1)}(r_{1})}{(\mu_{1}-1)!} &
\cdots\cdots & \frac{({F^{(1)}x^{\gamma_{1}-1}})^{(0)}(r_{m})}{0!} & \cdots &
\frac{({F^{(1)}x^{\gamma_{1}-1}})^{(\mu_{m}-1)}(r_{m})}{(\mu_{m}-1)!}\\
~~~~~~~~\vdots &  & ~~~~~~~~\vdots &  & ~~~~~~~~\vdots &  & ~~~~~~~~\vdots\\
\frac{({F^{(1)}x^{0}})^{(0)}(r_{1})}{0!} & \cdots & \frac{({F^{(1)}x^{0}%
})^{(\mu_{1}-1)}(r_{1})}{(\mu_{1}-1)!} & \cdots\cdots & \frac{({F^{(1)}x^{0}%
})^{(0)}(r_{m})}{0!} & \cdots & \frac{({F^{(1)}x^{0}})^{(\mu_{m}-1)}(r_{m}%
)}{(\mu_{m}-1)!}\\\hline
~~~~~~~~\vdots &  & ~~~~~~~~\vdots &  & ~~~~~~~~\vdots &  & ~~~~~~~~\vdots
\\\hline
\frac{({F}^{(s)}x^{\gamma_{s}-1})^{(0)}(r_{1})}{0!} & \cdots & \frac
{({F}^{(s)}x^{\gamma_{s}-1})^{(\mu_{1}-1)}(r_{1})}{(\mu_{1}-1)!} &
\cdots\cdots & \frac{({F}^{(s)}x^{\gamma_{s}-1})^{(0)}(r_{m})}{0!} & \cdots &
\frac{({F^{(s)}}x^{\gamma_{s}-1})^{(\mu_{m}-1)}(r_{m})}{(\mu_{m}-1)!}\\
~~~~~~~~\vdots &  & ~~~~~~~~\vdots &  & ~~~~~~~~\vdots &  & ~~~~~~~~\vdots\\
\frac{({F}^{(s)}x^{0})^{(0)}(r_{1})}{0!} & \cdots & \frac{({F^{(s)}}%
x^{0})^{(\mu_{1}-1)}(r_{1})}{(\mu_{1}-1)!} & \cdots\cdots & \frac{({F^{(s)}%
}x^{0})^{(0)}(r_{m})}{0!} & \cdots & \frac{({F^{(s)}}x^{0})^{(\mu_{m}%
-1)}(r_{m})}{(\mu_{m}-1)!}%
\end{array}
\right|  }{\prod\limits_{1\leq i<j\leq m}(r_{i}-r_{j})^{\mu_{i}\mu_{j}}}\\
&  =\frac{c\cdot\left| \setlength{\arraycolsep}{2.0pt}%
\begin{array}
[c]{lcl|c|lcl}%
({F^{(1)}x^{\gamma_{1}-1}})^{(0)}(r_{1}) & \cdots & ({F^{(1)}x^{\gamma_{1}-1}%
})^{(\mu_{1}-1)}(r_{1}) & \cdots\cdots & ({F^{(1)}x^{\gamma_{1}-1}}%
)^{(0)}(r_{m}) & \cdots & ({F^{(1)}x^{\gamma_{1}-1}})^{(\mu_{m}-1)}(r_{m})\\
~~~~~~~~\vdots &  & ~~~~~~~~\vdots &  & ~~~~~~~~\vdots &  & ~~~~~~~~\vdots\\
({F^{(1)}x^{0}})^{(0)}(r_{1}) & \cdots & ({F^{(1)}x^{0}})^{(\mu_{1}-1)}%
(r_{1}) & \cdots\cdots & ({F^{(1)}x^{0}})^{(0)}(r_{m}) & \cdots &
({F^{(1)}x^{0}})^{(\mu_{m}-1)}(r_{m})\\\hline
~~~~~~~~\vdots &  & ~~~~~~~~\vdots &  & ~~~~~~~~\vdots &  & ~~~~~~~~\vdots
\\\hline
({F^{(s)}}x^{\gamma_{s}-1})^{(0)}(r_{1}) & \cdots & ({F}^{(s)}x^{\gamma_{s}%
-1})^{(\mu_{1}-1)}(r_{1}) & \cdots\cdots & ({F^{(s)}}x^{\gamma_{s}-1}%
)^{(0)}(r_{m}) & \cdots & ({F}^{(s)}x^{\gamma_{s}-1})^{(\mu_{m}-1)}(r_{m})\\
~~~~~~~~\vdots &  & ~~~~~~~~\vdots &  & ~~~~~~~~\vdots &  & ~~~~~~~~\vdots\\
({F^{(s)}}x^{0})^{(0)}(r_{1}) & \cdots & ({F^{(s)}}x^{0})^{(\mu_{1}-1)}%
(r_{1}) & \cdots\cdots & ({F^{(s)}}x^{0})^{(0)}(r_{m}) & \cdots & ({F^{(s)}%
}x^{0})^{(\mu_{m}-1)}(r_{m})
\end{array}
\right|  }{\prod\limits_{1\leq i<j\leq m}(r_{i}-r_{j})^{\mu_{i}\mu_{j}}}%
\end{align*}
where $c=\pm1\Big/\left(  \prod_{i=1}^{m}\prod_{j=0}^{\mu_{i}-1}j!\right)
\cdot a_{n}^{\gamma_{1}-2}$.
\end{enumerate}
\end{proof}

\subsection{Connection between the multiplicity discriminants and multiplicity
vectors}

By decompiling Theorem \ref{thm:main_result}, we identify the two essential
ingredients therein, which are re-stated as Lemmas \ref{lem:D_mu_not_zero} and
\ref{lem:D_lambda_all_zeros} below. From now on, we will use $\overline
{\boldsymbol{\gamma}}$ to denote the conjugate of $\boldsymbol{\gamma}%
\in\mathcal{M}(n)$. {To prove the lemmas, we recall the following well known fact \cite{2011_Bona} which depicts the connection between $\gamma$ and its conjugate.}

\begin{lemma}
{Let $\gamma\in\mathcal{M}(n)$. Then $\overline{\overline{\gamma}}=\gamma$. Moreover, if $\gamma$ and $\mu$ are conjugates to each other, then $\#\{\mu_j:\mu_j\ge i\}=\gamma_i$ and $\#\{\lambda_j:\lambda_j\ge i\}=\mu_i$}.
\end{lemma}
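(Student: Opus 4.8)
The plan is to reduce everything to the standard Ferrers (Young) diagram picture, in which conjugation is just transposition. To $\gamma=(\gamma_1,\ldots,\gamma_s)\in\mathcal{M}(n)$ I associate the set of cells $D_\gamma=\{(i,j):\ i,j\ \text{positive integers},\ j\le\gamma_i\}$, so that row $i$ of $D_\gamma$ has exactly $\gamma_i$ cells. I extend the conjugate by the harmless convention $\overline\gamma_j=\#\{k:\gamma_k\ge j\}=0$ for $j>\gamma_1$; this costs nothing, since the definition already declares the conjugate to have $\gamma_1$ parts and $\#\{k:\gamma_k\ge j\}>0$ exactly when $j\le\gamma_1$.

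The key step is the \emph{column-count identity}: for all positive integers $i,j$,
\[
(i,j)\in D_\gamma\quad\Longleftrightarrow\quad i\le\overline\gamma_j .
\]
This is where the hypothesis $\gamma_1\ge\cdots\ge\gamma_s>0$ is used. Because $(\gamma_k)$ is weakly decreasing, the set $\{k:\gamma_k\ge j\}$ is an initial segment $\{1,2,\ldots,\overline\gamma_j\}$ of $\{1,\ldots,s\}$; hence $\gamma_i\ge j$ (that is, $(i,j)\in D_\gamma$) holds iff $i$ lies in that segment, i.e. iff $i\le\overline\gamma_j$. As a byproduct, $\overline\gamma_1\ge\overline\gamma_2\ge\cdots$ (the sets $\{k:\gamma_k\ge j\}$ shrink as $j$ grows), $\overline\gamma_1=\#\{k:\gamma_k\ge 1\}=s$, and $\sum_j\overline\gamma_j=\#D_\gamma=\sum_i\gamma_i=n$, so $\overline\gamma\in\mathcal{M}(n)$ and conjugation is an honest map $\mathcal{M}(n)\to\mathcal{M}(n)$.

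Now I apply the column-count identity twice. By definition $\overline{\overline\gamma}_i=\#\{j:\overline\gamma_j\ge i\}$; by the identity, $\overline\gamma_j\ge i\iff (i,j)\in D_\gamma\iff j\le\gamma_i$, whence $\overline{\overline\gamma}_i=\#\{j:1\le j\le\gamma_i\}=\gamma_i$ for every $i$. This gives $\overline{\overline\gamma}=\gamma$. (Equivalently, the identity says $D_{\overline\gamma}=\{(j,i):(i,j)\in D_\gamma\}$ is the transpose of $D_\gamma$, and transposing twice returns $D_\gamma$.) Finally, the ``moreover'' part is then immediate: if $\gamma$ and $\mu$ are conjugate, then $\mu=\overline\gamma$ is by definition the statement $\#\{k:\gamma_k\ge i\}=\mu_i$, and applying $\overline{\overline\gamma}=\gamma$, i.e. $\gamma=\overline\mu$, is the statement $\#\{k:\mu_k\ge i\}=\gamma_i$.

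There is no genuine obstacle in this argument; the only point requiring care is the bookkeeping at the boundary — verifying that the column-count identity holds for all index pairs and that $\overline\gamma$ is a legitimate element of $\mathcal{M}(n)$ (weakly decreasing, with the correct number $\gamma_1$ of parts, summing to $n$) — and this is exactly what the convention $\overline\gamma_j=0$ for $j>\gamma_1$ together with the monotonicity of $\gamma$ makes routine.
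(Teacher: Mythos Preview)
Your argument is correct and is the standard Ferrers-diagram proof: interpret a partition as a set of cells, observe that conjugation transposes the diagram, and read off the involution property and the column-count identities. There is nothing to fix.

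The paper, however, does not give its own proof of this lemma at all. It simply states the result as a ``well known fact'' and cites B\'ona's combinatorics textbook \cite{2011_Bona}. So there is no paper proof to compare against; your write-up is exactly the kind of argument one would find in that reference. (Incidentally, the paper's statement contains a typo: the second displayed identity should read $\#\{\gamma_j:\gamma_j\ge i\}=\mu_i$ rather than involving an undefined $\lambda$; you correctly interpreted and proved the intended claim.)
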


\begin{lemma}
\label{lem:D_mu_not_zero} Let $\operatorname*{mult}(F)=\boldsymbol{\mu}$. Then
$D(\bar{\boldsymbol{\mu}})\neq0$.
\end{lemma}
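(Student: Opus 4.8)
The plan is to use the root-form expression from Lemma~\ref{lem:D_mu_in_mroots} and show that the big determinant in the numerator of~\eqref{eq:div_diff_det} is nonzero when $\boldsymbol{\gamma}=\overline{\boldsymbol{\mu}}$. Since the denominator $\prod_{i<j}(r_i-r_j)^{\mu_i\mu_j}$ is obviously nonzero (the $r_i$ are distinct) and the constant $c$ is nonzero (as $a_n\neq0$), the whole expression $D(\overline{\boldsymbol{\mu}})$ is nonzero iff that numerator determinant is nonzero. So the task reduces to a statement about the rank of an $n\times n$ matrix whose columns are grouped into $m$ blocks, the $i$-th block having $\mu_i$ columns indexed by a derivative order $k=0,\ldots,\mu_i-1$ and evaluated at $r_i$, and whose rows are grouped into $s=\overline{\mu}_1=\mu_1$ blocks, the $\ell$-th block being the rows coming from $F^{(\ell)}x^{j}$ for $j=\overline{\mu}_\ell-1,\ldots,0$.

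First I would record the key vanishing fact: since $r_i$ is a root of $F$ of multiplicity $\mu_i$, we have $F^{(\ell)}(r_i)=0$ for all $\ell<\mu_i$, and more generally $(F^{(\ell)}x^{j})^{(k)}(r_i)$ is a combination (by the Leibniz rule) of $F^{(\ell)}(r_i),\ldots,F^{(\ell+k)}(r_i)$, hence vanishes whenever $\ell+k<\mu_i$. This means many entries of the matrix are zero, and I would use this to put the matrix (after suitable row/column permutations) into a block-triangular-like shape. The second ingredient is a non-vanishing fact for the ``surviving'' entries: when $\ell+k=\mu_i$ (the first nonzero order), $(F^{(\ell)}x^j)^{(k)}(r_i)$ is a nonzero multiple of $F^{(\mu_i)}(r_i)\,r_i^{\,j}$ (up to lower-order terms that are already accounted for), and $F^{(\mu_i)}(r_i)\neq0$ because $r_i$ has multiplicity exactly $\mu_i$. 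The combinatorial heart is to check that the index set matches up: the rows available with second index $j$ and derivative label $\ell$ range over $\ell=1,\ldots,s$ and $j=0,\ldots,\overline{\mu}_\ell-1$, and one must show that for each block of $\mu_i$ columns at $r_i$ one can select $\mu_i$ rows, across distinct $(\ell,j)$ pairs, realizing a nonzero ``triangular'' pattern — and that these selections over all $i$ use each row exactly once. This is exactly where the conjugate enters: $\#\{\,\ell : \overline{\mu}_\ell \ge t\,\}=\mu_t$ and the relation $\overline{\overline{\mu}}=\mu$ guarantee the counts balance.

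Concretely I would argue as follows. Expand the determinant by the generalized Leibniz/permutation expansion, or better, perform column operations within each $r_i$-block (replacing the $k$-th column by the appropriate linear combination, which only rescales the determinant by a nonzero constant) so that the $k$-th column of the $r_i$-block has its ``leading'' nonzero entries governed purely by $F^{(\mu_i)}(r_i)\,r_i^{\,j}$ in the rows with $\ell+k=\mu_i$, i.e.\ $\ell=\mu_i-k$. For fixed $i$, as $k$ runs $0,\ldots,\mu_i-1$, the relevant $\ell=\mu_i-k$ runs $\mu_i,\mu_i-1,\ldots,1$, and for each such $\ell$ the available second indices $j$ are $0,\ldots,\overline{\mu}_\ell-1$; since $\overline{\mu}_\ell\ge \overline{\mu}_{\mu_i}=\#\{\mu_t:\mu_t\ge\mu_i\}\ge\#\{t:t\le i\}$ is large enough, there is room. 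I then claim one can order the columns and rows so that the matrix becomes block upper-triangular with diagonal blocks that are (scaled) confluent Vandermonde-type matrices in the $r_i$'s of size equal to the number of distinct roots with a given multiplicity, whose determinants are nonzero because the $r_i$ are distinct and all scalars $F^{(\mu_i)}(r_i)$ are nonzero. Taking the product of the diagonal block determinants gives a nonzero value for the numerator, hence $D(\overline{\boldsymbol{\mu}})\neq0$.

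The main obstacle I anticipate is the bookkeeping in the previous paragraph: making the block-triangularization rigorous requires a carefully chosen simultaneous ordering of the $n$ rows and $n$ columns so that (i) every entry strictly ``above the diagonal blocks'' vanishes by the multiplicity count $\ell+k<\mu_i$, (ii) the diagonal blocks group together exactly the columns/rows sharing a common multiplicity level, and (iii) each diagonal block is recognizably a product of a nonzero diagonal scaling (the $F^{(\mu_i)}(r_i)$'s and factorials) with an invertible Vandermonde-type matrix in the distinct $r_i$'s. The conjugate identities $\overline{\overline{\boldsymbol{\mu}}}=\boldsymbol{\mu}$ and $\#\{\overline{\mu}_\ell:\overline{\mu}_\ell\ge t\}=\mu_t$ are precisely what guarantee that the row supply at each derivative label $\ell$ and height $j$ exactly matches the column demand, so the triangularization closes up with no leftover rows or columns; verifying this matching cleanly (perhaps by an explicit bijection between index pairs $(\ell,j)$ and column positions) is the part that needs care rather than cleverness.
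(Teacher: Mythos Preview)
Your approach is correct and essentially identical to the paper's: start from Lemma~\ref{lem:D_mu_in_mroots}, use the Leibniz rule to see that $(F^{(\ell)}x^{j})^{(k)}(r_i)=0$ for $\ell+k<\mu_i$ and $(F^{(\ell)}x^{j})^{(k)}(r_i)=F^{(\mu_i)}(r_i)\,r_i^{\,j}$ for $\ell+k=\mu_i$, then permute columns into block anti-triangular form with diagonal blocks $M_\ell$ ($\ell=1,\dots,\mu_1$) satisfying $|M_\ell|=\bigl(\prod_{j\le\gamma_\ell}F^{(\mu_j)}(r_j)\bigr)\,V(r_1,\ldots,r_{\gamma_\ell})\neq0$. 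One small correction to keep you from going astray in the bookkeeping: the diagonal blocks are ordinary column-scaled Vandermonde matrices of size $\gamma_\ell=\#\{j:\mu_j\ge\ell\}$ (not confluent Vandermondes, and not ``number of roots with a given multiplicity''), and no column operations beyond permutation are needed since the Leibniz sum collapses to a single term at the critical order $\ell+k=\mu_i$.
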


\begin{proof}
In order to convey the main underlying ideas effectively, we will show the
proof for a particular case first. After that, we will generalize the ideas to
arbitrary cases.

\bigskip

\noindent\textbf{Particular case:} Consider the case $n=5$ and
$\operatorname*{mult}(F)=\boldsymbol{\mu}=\left(  3,2\right) $.

\begin{enumerate}
\item Assume that $r_{1}$ and $r_{2}$ are the two distinct roots with
multiplicities $3$ and $2,$ respectively. In other words, $F=a_{5}%
(x-r_{1})^{3}(x-r_{2})^{2}$.

\item Let $\boldsymbol{\gamma}=\bar{\boldsymbol{\mu}}$. Then
\[
\gamma_{1}=\#\{\mu_{j}:\,\mu_{j}\ge1\}=2,\quad\gamma_{2}=\#\{\mu_{j}:\,\mu
_{j}\ge2\}=2,\quad\gamma_{3}=\#\{\mu_{j}:\,\mu_{j}\ge3\}=1.
\]
Thus $\boldsymbol{\gamma}=(2,2,1)$.

\item By Lemma \ref{lem:D_mu_in_mroots},
\begin{align*}
D(\boldsymbol{\gamma})  & =\frac{c\cdot\left|
\setlength{\arraycolsep}{8.0pt}
\begin{array}
[c]{lcl|cc}%
({F^{(1)}x^{1}})^{(0)}(r_{1}) & ({F^{(1)}x^{1}})^{(1)}(r_{1}) & ({F^{(1)}%
x^{1}})^{(2)}(r_{1}) & ({F^{(1)}x^{1}})^{(0)}(r_{2}) & ({F^{(1)}x^{1}}%
)^{(1)}(r_{2})\\
({F^{(1)}x^{0}})^{(0)}(r_{1}) & ({F^{(1)}x^{0}})^{(1)}(r_{1}) & ({F^{(1)}%
x^{0}})^{(2)}(r_{1}) & ({F^{(1)}x^{0}})^{(0)}(r_{2}) & ({F^{(1)}x^{0}}%
)^{(1)}(r_{2})\\\hline
({F^{(2)}x^{1}})^{(0)}(r_{1}) & ({F^{(2)}x^{1}})^{(1)}(r_{1}) & ({F^{(2)}%
x^{1}})^{(2)}(r_{1}) & ({F^{(2)}x^{1}})^{(0)}(r_{2}) & ({F^{(2)}x^{1}}%
)^{(1)}(r_{2})\\
({F^{(2)}x^{0}})^{(0)}(r_{1}) & ({F^{(2)}x^{0}})^{(1)}(r_{1}) & ({F^{(2)}%
x^{0}})^{(2)}(r_{1}) & ({F^{(2)}x^{0}})^{(0)}(r_{2}) & ({F^{(2)}x^{0}}%
)^{(1)}(r_{2})\\\hline
({F^{(3)}x^{0}})^{(0)}(r_{1}) & ({F^{(3)}x^{0}})^{(1)}(r_{1}) & ({F^{(3)}%
x^{0}})^{(2)}(r_{1}) & ({F^{(3)}x^{0}})^{(0)}(r_{2}) & ({F^{(3)}x^{0}}%
)^{(1)}(r_{2})
\end{array}
\right|  }{(r_{1}-r_{2})^{6}}%
\end{align*}
where
\[
c=\pm1\big/\left[ (0!\cdot1!\cdot2!)\cdot(0!\cdot1!)\right] \cdot a_{5}%
^{0}=\pm1/2.
\]

\item Since
\begin{equation}
\label{eqs:derivatives_of_ri}F^{(i)}(r_{1})\left\{
\begin{array}
[c]{ll}%
=0, & \text{for}\ i=0,1,2;\\
\ne0, & \text{for}\ i=3;
\end{array}
\right. \qquad F^{(i)}(r_{2})\left\{
\begin{array}
[c]{ll}%
=0, & \text{for}\ i=0,1;\\
\ne0, & \text{for}\ i=2,
\end{array}
\right.
\end{equation}
{by the Leibniz's rule for derivatives}, we immediately know {\small
\[
\setlength{\arraycolsep}{1.0pt}
\begin{array}
[c]{l}%
({F^{(1)}x^{1}})^{(0)}(r_{1})=0,\ ({F^{(1)}x^{1}})^{(1)}(r_{1}%
)=0,\ ({F^{(1)}x^{1}})^{(2)}(r_{1})=F^{(3)}(r_{1})r_{1}^{1},\ ({F^{(1)}%
x^{1}})^{(0)}(r_{2})=0,\ ({F^{(1)}x^{1}})^{(1)}(r_{2})=F^{(2)}(r_{2}%
)r_{2}^{1},\\[2pt]%
({F^{(1)}x^{0}})^{(0)}(r_{1})=0,\ ({F^{(1)}x^{0}})^{(1)}(r_{1}%
)=0,\ ({F^{(1)}x^{0}})^{(2)}(r_{1})=F^{(3)}(r_{1})r_{1}^{0},\ ({F^{(1)}%
x^{0}})^{(0)}(r_{2})=0,\ ({F^{(1)}x^{0}})^{(1)}(r_{2})=F^{(2)}(r_{2}%
)r_{2}^{0},\\[2pt]%
({F^{(2)}x^{1}})^{(0)}(r_{1})=0,\ ({F^{(2)}x^{1}})^{(1)}(r_{1}%
)=F^{(3)}(r_{1})r_{1}^{1}, \hspace{9.5em} ({F^{(2)}x^{1}})^{(0)}(r_{2}%
)={F^{(2)}(r_{2})r_{2}^{1}},\\[2pt]%
({F^{(2)}x^{0}})^{(0)}(r_{1})=0,\ ({F^{(2)}x^{0}})^{(1)}(r_{1})=F^{(3)}%
(r_{1})r_{1}^{0}, \hspace{9.5em} ({F^{(2)}x^{0}})^{(0)}(r_{2})={F^{(2)}%
(r_{2})r_{2}^{0}},\\[2pt]%
({F^{(3)}x^{0}})^{(0)}(r_{1})=F^{(3)}(r_{1})r_{1}^{0}.%
\end{array}
\]
}

\item Therefore,
\[
D(\boldsymbol{\gamma}) =\frac{c\cdot\left|
\begin{array}
[c]{ccc|cc}%
0 & 0 & F^{(3)}(r_{1})r_{1}^{1} & 0 & F^{(2)}(r_{2})r_{2}^{1}\\
0 & 0 & F^{(3)}(r_{1})r_{1}^{0} & 0 & F^{(2)}(r_{2})r_{2}^{0}\\\hline
0 & F^{(3)}(r_{1})r_{1}^{1} & \cdot & F^{(2)}(r_{2})r_{2}^{1} & \cdot\\
0 & F^{(3)}(r_{1})r_{1}^{0} & \cdot & F^{(2)}(r_{2})r_{2}^{0} & \cdot\\\hline
F^{(3)}(r_{1})r_{1}^{0} & \cdot & \cdot & \cdot & \cdot
\end{array}
\right|  }{(r_{1}-r_{2})^{6}}.%
\]

\item By rearranging the columns of the determinant in the numerator, we have
\begin{align*}
D(\boldsymbol{\gamma})  & =\pm\frac{c\cdot\left|
\begin{array}
[c]{c|cc|cc}
&  &  & F^{(3)}(r_{1})r_{1}^{1} & F^{(2)}(r_{2})r_{2}^{1}\\
&  &  & F^{(3)}(r_{1})r_{1}^{0} & F^{(2)}(r_{2})r_{2}^{0}\\\hline
& F^{(3)}(r_{1})r_{1}^{1} & F^{(2)}(r_{2})r_{2}^{1} & \cdot & \cdot\\
& F^{(3)}(r_{1})r_{1}^{0} & F^{(2)}(r_{2})r_{2}^{0} & \cdot & \cdot\\\hline
F^{(3)}(r_{1})r_{1}^{0} & \cdot & \cdot & \cdot & \cdot
\end{array}
\right|  }{(r_{1}-r_{2})^{6}}\\
& =\pm\frac{c\cdot\left|
\begin{array}
[c]{ccc}
&  & M_{1}\\
& M_{2} & \cdot\\
M_{3} & \cdot & \cdot
\end{array}
\right|  }{(r_{1}-r_{2})^{6}}%
\end{align*}

where%
\[
M_{1}= \left[
\begin{array}
[c]{cc}%
F^{(3)}(r_{1})r_{1}^{1} & F^{(2)}(r_{2})r_{2}^{1}\\
F^{(3)}(r_{1})r_{1}^{0} & F^{(2)}(r_{2})r_{2}^{0}%
\end{array}
\right],  \qquad M_{2}=\left[
\begin{array}
[c]{cc}%
F^{(3)}(r_{1})r_{1}^{1} & F^{(2)}(r_{2})r_{2}^{1}\\
F^{(3)}(r_{1})r_{1}^{0} & F^{(2)}(r_{2})r_{2}^{0}%
\end{array}
\right],  \qquad M_{3}=\left[ F^{(3)}(r_{1})r_{1}^{0}\right].
\]

\item Obviously,
\[
D(\boldsymbol{\gamma})=\pm\frac{c\cdot\left| M_{1}\right| \cdot\left|
M_{2}\right| \cdot\left| M_{3}\right| }{(r_{1}-r_{2})^{6}}.%
\]

We only need to show that $M_{i}\ne0$ for $i=1,2,3$. The claim follows from
the following observations:
\begin{align*}
\left| M_{1}\right|  & =F^{(3)}(r_{1})F^{(2)}(r_{2})V(r_{1},r_{2})\ne0,\\
\left| M_{2}\right|  & =F^{(3)}(r_{1})F^{(2)}(r_{2})V(r_{1},r_{2})\ne0,\\
\left| M_{3}\right|  & =F^{(3)}(r_{1})V(r_{1})\ne0.
\end{align*}
The proof is completed.
\end{enumerate}

\medskip\noindent\textbf{Arbitrary case.} Now we generalize the above ideas to
arbitrary cases.

\begin{enumerate}
\item Let $\boldsymbol{\mu}=(\mu_{1},\ldots,\mu_{m})$. Assume that
$r_{1},\ldots,r_{m}$ are the $m$ distinct roots with multiplicities $\mu
_{1},\ldots,\mu_{m} $ respectively. In other words, $F=a_{n}(x-r_{1})^{\mu
_{1}}\cdots(x-r_{m})^{\mu_{m}}$.

\item Let $\boldsymbol{\gamma}=\bar{\boldsymbol{\mu}}=(\gamma_{1}%
,\ldots,\gamma_{s})$, i.e., $\gamma_{i}=\#\{\mu_{j}:\,\mu_{j}\ge i\}$. Note
that $\gamma_{1}=m$ and $s=\mu_{1}$ since $\mu_{1}\ge\cdots\ge\mu_{m}\ge1$.

\item Recall that
\begin{equation}
\label{eq:rep_D_mu}D(\boldsymbol{\gamma})=\frac{c\cdot\left|
\setlength{\arraycolsep}{1.0pt}%
\begin{array}
[c]{lcl|c|lcl}%
({F^{(1)}x^{\gamma_{1}-1}})^{(0)}(r_{1}) & \cdots & ({F^{(1)}x^{\gamma_{1}-1}%
})^{(\mu_{1}-1)}(r_{1}) & \cdots\cdots & ({F^{(1)}x^{\gamma_{1}-1}}%
)^{(0)}(r_{m}) & \cdots & ({F^{(1)}x^{\gamma_{1}-1}})^{(\mu_{m}-1)}(r_{m})\\
~~~~~~~~\vdots &  & ~~~~~~~~\vdots &  & ~~~~~~~~\vdots &  & ~~~~~~~~\vdots\\
({F^{(1)}x^{0}})^{(0)}(r_{1}) & \cdots & ({F^{(1)}x^{0}})^{(\mu_{1}-1)}%
(r_{1}) & \cdots\cdots & ({F^{(1)}x^{0}})^{(0)}(r_{m}) & \cdots &
({F^{(1)}x^{0}})^{(\mu_{m}-1)}(r_{m})\\\hline
~~~~~~~~\vdots &  & ~~~~~~~~\vdots &  & ~~~~~~~~\vdots &  & ~~~~~~~~\vdots
\\\hline
({F^{(\mu_{1})}}x^{\gamma_{\mu_{1}}-1})^{(0)}(r_{1}) & \cdots & ({F}^{(\mu
_{1})}x^{\gamma_{\mu_{1}}-1})^{(\mu_{1}-1)}(r_{1}) & \cdots\cdots &
({F^{(\mu_{1})}}x^{\gamma_{\mu_{1}}-1})^{(0)}(r_{m}) & \cdots & ({F}^{(\mu
_{1})}x^{\gamma_{\mu_{1}}-1})^{(\mu_{m}-1)}(r_{m})\\
~~~~~~~~\vdots &  & ~~~~~~~~\vdots &  & ~~~~~~~~\vdots &  & ~~~~~~~~\vdots\\
({F^{(\mu_{1})}}x^{0})^{(0)}(r_{1}) & \cdots & ({F^{(\mu_{1})}}x^{0}%
)^{(\mu_{1}-1)}(r_{1}) & \cdots\cdots & ({F^{(\mu_{1})}}x^{0})^{(0)}(r_{m}) &
\cdots & ({F^{(\mu_{1})}}x^{0})^{(\mu_{m}-1)}(r_{m})
\end{array}
\right|  }{\prod\limits_{1\leq i<j\leq m}(r_{i}-r_{j})^{\mu_{i}\mu_{j}}}%
\end{equation}
where $c=\pm1\Big/\left( \prod_{i=1}^{m}\prod_{j=0}^{\mu_{i}-1}j!\right) \cdot
a_{n}^{\gamma_{1}-2}$.

\item
{Since $F$ and its first $\mu_j-1$ derivatives are equal to zero at $x = r_j$, by the Leibniz's rule for derivatives, we immediately know that for $i = 1,\ldots, s$ and $j$ satisfying $\mu_j\ge i$:
\begin{equation}\label{eqs:derivatives}
(F^{(i)}x^k)^{(\ell)}(r_j)=\left\{
\begin{array}{ll}
0,&\text{if}\ \ell<\mu_j-i;\\
F^{(\mu_j)}(r_j)\cdot r_j^k, &\text{if}\ \ell=\mu_j-i;\\
*,&\text{if}\ \ell>\mu_j-i.
\end{array}
\right.
\end{equation}
}

\item Plugging \eqref{eqs:derivatives} into \eqref{eq:rep_D_mu}, we have
\[
D(\boldsymbol{\gamma})=\frac{c\cdot{\footnotesize \left|
\setlength{\arraycolsep}{1.0pt}%
\begin{array}
[c]{lcll|c|lcll|l}%
~~~~~~~~0 & \cdots & ~~~~~~~~0 & F^{(\mu_{1})}(r_{1})r_{1}^{\gamma_{1}-1} &
\cdots & ~~~~~~~~0 & \cdots & ~~~~~~~~0 & \quad F^{(\mu_{i})}(r_{i}%
)r_{i}^{\gamma_{1}-1} & \cdots\\
~~~~~~~~\vdots &  & ~~~~~~~~\vdots & ~~~~~~~~\vdots &  & ~~~~~~~~\vdots &  &
~~~~~~~~\vdots & ~~~~~~~~\vdots & \\
~~~~~~~~0 & \cdots & ~~~~~~~~0 & F^{(\mu_{1})}(r_{1})r_{1}^{0} & \cdots &
~~~~~~~~0 & \cdots & ~~~~~~~~0 & \quad F^{(\mu_{i})}(r_{i})r_{i}^{0} &
\cdots\\\hline
~~~~~~~~0 & \cdots & F^{(\mu_{1})}(r_{1})r_{1}^{\gamma_{2}-1} &
~~~~~~~~\,\cdot & \cdots & ~~~~~~~~0 & \cdots & F^{(\mu_{i})}(r_{i}%
)r_{i}^{\gamma_{2}-1} & ~~~~~~~~\,\cdot & \cdots\\
~~~~~~~~\vdots &  & ~~~~~~~~\vdots & ~~~~~~~~\vdots &  & ~~~~~~~~\vdots &  &
~~~~~~~~\vdots & ~~~~~~~~\vdots & \\
~~~~~~~~0 & \cdots & F^{(\mu_{1})}(r_{1})r_{1}^{0} & ~~~~~~~~\,\cdot & \cdots
& ~~~~~~~~0 & \cdots & F^{(\mu_{i})}(r_{i})r_{i}^{0} & ~~~~~~~~\,\cdot &
\cdots\\\hline
~~~~~~~~\vdots &  & ~~~~~~~~\vdots & ~~~~~~~~\vdots &  & ~~~~~~~~\vdots &  &
~~~~~~~~\vdots & ~~~~~~~~\vdots & \\\hline
~~~~~~~~0 &  & ~~~~~~~~\,\cdot & ~~~~~~~~\cdot & \cdots & F^{(\mu_{i})}%
(r_{i})r_{i}^{\gamma_{\mu_{i}}-1} &  & ~~~~~~~~\,\cdot & ~~~~~~~~\,\cdot &
\cdots\\
~~~~~~~~\vdots &  & ~~~~~~~~\vdots & ~~~~~~~~\vdots &  & ~~~~~~~~\vdots &  &
~~~~~~~~\vdots & ~~~~~~~~\vdots & \\
~~~~~~~~0 &  & ~~~~~~~~\,\cdot & ~~~~~~~~\cdot & \cdots & F^{(\mu_{i})}%
(r_{i})r_{i}^{0} &  & ~~~~~~~~\,\cdot & ~~~~~~~~\,\cdot & \cdots\\\hline
~~~~~~~~\vdots &  & ~~~~~~~~\vdots & ~~~~~~~~\vdots &  & ~~~~~~~~\vdots &  &
~~~~~~~~\vdots & ~~~~~~~~\vdots & \\\hline
F^{(\mu_{1})}(r_{1})r_{1}^{\gamma_{\mu_{1}}-1} & \cdots & ~~~~~~~~\,\cdot &
~~~~~~~~\,\cdot & \cdots & ~~~~~~~~\,\cdot & \cdots & ~~~~~~~~\,\cdot &
~~~~~~~~\,\cdot & \cdots\\
~~~~~~~~\vdots &  & ~~~~~~~~\vdots & ~~~~~~~~\vdots &  & ~~~~~~~~\vdots &  &
~~~~~~~~\vdots & ~~~~~~~~\vdots & \\
F^{(\mu_{1})}(r_{1})r_{1}^{0} &  & ~~~~~~~~\,\cdot & ~~~~~~~~\,\cdot & \cdots
& ~~~~~~~~\,\cdot & \cdots & ~~~~~~~~\,\cdot & ~~~~~~~~\,\cdot & \cdots
\end{array}
\right| } }{\prod\limits_{1\leq i<j\leq m}(r_{i}-r_{j})^{\mu_{i}\mu_{j}}}%
\]
{where the block $D_{ij}$ at the $i$-th row and the $j$-th column is a matrix
$\setlength{\arraycolsep}{1pt}
\begin{pmatrix}
0&\cdots&0&F^{(\mu_j)}(r_j)r_j^{\gamma_i-1}&*&\cdots&*\\
\vdots&\ddots&\vdots&\vdots&\vdots&\ddots&\vdots&\\
0&\cdots&0&F^{(\mu_j)}(r_j)&*&\cdots&*
\end{pmatrix}
$
of size $\gamma_i\times\mu_j$.}

\item By rearranging the columns of the determinant in the numerator, we have
\begin{align*}
D(\boldsymbol{\gamma}) & =\pm\frac{c\cdot{\scriptsize \left|
\setlength{\arraycolsep}{1.0pt}
\begin{array}
[c]{lcl|ccc|lcl|lcl}
&  &  &  &  &  &  &  &  & F^{(\mu_{1})}(r_{1})r_{1}^{\gamma_{1}-1} & \cdots &
F^{(\mu_{m})}(r_{m})r_{m}^{\gamma_{1}-1}\\
&  &  &  &  &  &  &  &  & ~~~~~~~~\vdots &  & ~~~~~~~~\vdots\\
&  &  &  &  &  &  &  &  & F^{(\mu_{1})}(r_{1})r_{1}^{0} & \cdots & F^{(\mu
_{m})}(r_{m})r_{m}^{0}\\\hline
&  &  &  &  &  & F^{(\mu_{1})}(r_{1})r_{1}^{\gamma_{2}-1} & \cdots &
F^{(\mu_{\gamma_{2}})} (r_{\gamma_{2}})r_{\gamma_{2}}^{\gamma_{2}-1} &
~~~~~~~~\,\cdot & \cdots & ~~~~~~~~\,\cdot\\
&  &  &  &  &  & ~~~~~~~~\vdots &  & ~~~~~~~~\vdots & ~~~~~~~~\vdots &  &
~~~~~~~~\vdots\\
&  &  &  &  &  & F^{(\mu_{1})}(r_{1})r_{1}^{0} & \cdots & F^{(\mu_{\gamma_{2}%
})}(r_{\gamma_{2}})r_{\gamma_{2}}^{0} & ~~~~~~~~\,\cdot & \cdots &
~~~~~~~~\,\cdot\\\hline
&  &  & \cdot & \cdots & \cdot & ~~~~~~~~\,\cdot & \cdots & ~~~~~~~~\,\cdot &
~~~~~~~~\,\cdot & \cdots & ~~~~~~~~\,\cdot\\
&  &  & \vdots &  & \vdots & ~~~~~~~~\vdots &  & ~~~~~~~~\vdots &
~~~~~~~~\vdots &  & ~~~~~~~~\vdots\\
&  &  & \cdot & \cdots & \cdot & ~~~~~~~~\,\cdot & \cdots & ~~~~~~~~\,\cdot &
~~~~~~~~\,\cdot & \cdots & ~~~~~~~~\,\cdot\\\hline
F^{(\mu_{1})}(r_{1})r_{1}^{\gamma_{\mu_{1}}-1} & \cdots & F^{(\mu_{\gamma
_{\mu_{1}}})}(r_{\gamma_{\mu_{1}}})r_{\gamma_{\mu_{1}}}^{\gamma_{\mu_{1}}-1} &
\cdot & \cdots & \cdot & ~~~~~~~~\,\cdot & \cdots & ~~~~~~~~\,\cdot &
~~~~~~~~\,\cdot & \cdots & ~~~~~~~~\,\cdot\\
~~~~~~~~\vdots &  & ~~~~~~~~\vdots & \vdots &  & \vdots & ~~~~~~~~\vdots &  &
~~~~~~~~\vdots & ~~~~~~~~\vdots &  & ~~~~~~~~\vdots\\
F^{(\mu_{1})}(r_{1})r_{1}^{0} & \cdots & F^{(\mu_{\gamma_{\mu_{1}}}%
)}(r_{\gamma_{\mu_{1}}})r_{\gamma_{\mu_{1}}}^{0} & \cdot & \cdots & \cdot &
~~~~~~~~\,\cdot & \cdots & ~~~~~~~~\,\cdot & ~~~~~~~~\,\cdot & \cdots &
~~~~~~~~\,\cdot
\end{array}
\right| } } {\prod\limits_{1\leq i<j\leq m}(r_{i}-r_{j})^{\mu_{i}\mu_{j}}}\\
& =\pm\frac{c\cdot\left|
\begin{array}
[c]{cccc}
&  &  & M_{1}\\
&  & M_{2} & \cdot\\[3pt]
& \!\!\!\!\!\!\!\begin{sideways}$\ddots$\end{sideways} &  & \cdot\\
M_{\mu_{1}} & \cdots & \cdot & \cdot
\end{array}
\right| }{\prod\limits_{1\leq i<j\leq m}(r_{i}-r_{j})^{\mu_{i}\mu_{j}}}%
\end{align*}
where
\[
M_{i}= \left[
\begin{array}
[c]{lcl}%
F^{(\mu_{1})}(r_{1})r_{1}^{\gamma_{i}-1} & \cdots & F^{(\mu_{\gamma_{i}}%
)}(r_{\gamma_{i}})r_{\gamma_{i}}^{\gamma_{i}-1}\\
~~~~~~~~\vdots &  & ~~~~~~~~\vdots\\
F^{(\mu_{1})}(r_{1})r_{1}^{0} & \cdots & F^{(\mu_{\gamma_{i}})}(r_{\gamma_{i}%
})r_{\gamma_{i}}^{0}%
\end{array}
\right]
\]
for $i=1,\ldots,\mu_{1}$. Then
\[
D(\boldsymbol{\gamma})=\pm\frac{c\cdot\left|  M_{1}\right| \cdots\left|
M_{\mu_{1}}\right| }{\prod\limits_{1\leq i<j\leq m}(r_{i}-r_{j})^{\mu_{i}%
\mu_{j}}}.%
\]

\item It only remains to show that $\left\vert M_{i}\right\vert\ne0$. The claim follows from the
following observations:
\[
\left|  M_{i}\right| =\left( \prod_{j=1}^{\gamma_{i}}F^{(\mu_{j})}%
(r_{j})\right) V(r_{1},\ldots,r_{\gamma_{i}})\ne0 \ \ \text{for}%
\ \ i=1,\ldots,\mu_{1}.%
\]
The proof is completed.
\end{enumerate}
\end{proof}

\begin{lemma}
\label{lem:D_lambda_all_zeros} Let $\operatorname*{mult}(F)=\boldsymbol{\mu}$.
Then $D(\boldsymbol{\lambda})=0$ for any $\boldsymbol{\lambda}$ such that
$\boldsymbol{\bar{\mu}}\prec_{\operatorname*{lex}}\boldsymbol{\lambda}$.
\end{lemma}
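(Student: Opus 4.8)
The plan is to express $D(\boldsymbol\lambda)$ in terms of the multiple roots of $F$ via Lemma~\ref{lem:D_mu_in_mroots}, and then to show that the resulting $n\times n$ determinant matrix is singular because the vanishing pattern \eqref{eqs:derivatives} forces too many of its columns to be supported on too few of its block-rows, the deficit being exactly what the hypothesis $\overline{\boldsymbol\mu}\prec_{\operatorname*{lex}}\boldsymbol\lambda$ supplies. Write $\boldsymbol\mu=(\mu_1,\dots,\mu_m)$, so that $F=a_n\prod_{i=1}^m(x-r_i)^{\mu_i}$ with $r_1,\dots,r_m$ pairwise distinct, and write $\boldsymbol\lambda=(\lambda_1,\dots,\lambda_s)\in\mathcal M(n)$. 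Applying Lemma~\ref{lem:D_mu_in_mroots} with $\boldsymbol\gamma=\boldsymbol\lambda$ yields
\[
D(\boldsymbol\lambda)=\frac{c\cdot\det N}{\prod_{1\le i<j\le m}(r_i-r_j)^{\mu_i\mu_j}},
\]
where $N$ is the matrix of \eqref{eq:div_diff_det} with $\boldsymbol\gamma=\boldsymbol\lambda$: its rows split into block-rows $t=1,\dots,s$, block-row $t$ consisting of the $\lambda_t$ rows labelled $F^{(t)}x^{\lambda_t-1},\dots,F^{(t)}x^{0}$; its columns split into $m$ groups, group $j$ consisting of the $\mu_j$ columns labelled $(r_j,\ell)$ for $\ell=0,\dots,\mu_j-1$; and the entry in row $F^{(t)}x^{k}$ and column $(r_j,\ell)$ equals $(F^{(t)}x^{k})^{(\ell)}(r_j)$. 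Since the $r_i$ are distinct the denominator is a nonzero number, so it suffices to prove $\det N=0$.

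The key structural observation is the zero-pattern \eqref{eqs:derivatives}: by Leibniz's rule together with $F^{(a)}(r_j)=0$ for $0\le a<\mu_j$, the entry $(F^{(t)}x^{k})^{(\ell)}(r_j)$ vanishes whenever $\ell<\mu_j-t$, and this condition is independent of the power $x^{k}$ labelling the row. Hence an \emph{entire} column $(r_j,\ell)$ of $N$ is zero throughout block-rows $1,\dots,\mu_j-\ell-1$; equivalently, column $(r_j,\ell)$ can be nonzero only in the block-rows $t$ with $t\ge\mu_j-\ell$.

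Now invoke the hypothesis. Let $k$ be the first index at which $\overline{\boldsymbol\mu}$ and $\boldsymbol\lambda$ differ, so $\overline{\mu}_i=\lambda_i$ for $i<k$ and $\overline{\mu}_k<\lambda_k$; since $\lambda_k\ge1$ we have $1\le k\le s$. Because $\sum_i\overline{\mu}_i=n=\sum_i\lambda_i$, subtracting the first $k$ terms gives
\[
\sum_{i>k}\overline{\mu}_i>\sum_{i>k}\lambda_i=\sum_{t=k+1}^{s}\lambda_t .
\]
Let $C$ be the set of columns $(r_j,\ell)$ of $N$ with $\mu_j-\ell\ge k+1$, i.e.\ $\ell\le\mu_j-k-1$. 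By the previous paragraph every column in $C$ is supported only on block-rows $k+1,\dots,s$, which together hold $\sum_{t=k+1}^{s}\lambda_t$ rows. On the other hand, since $\overline{\mu}_i=\#\{j:\mu_j\ge i\}$, a double count gives $|C|=\sum_{j=1}^{m}\max(0,\mu_j-k)=\sum_{i>k}\overline{\mu}_i$. Combining, $|C|>\sum_{t=k+1}^{s}\lambda_t$, so the $|C|$ columns of $N$ lying in $C$ belong, as vectors in $\mathbb C^{n}$, to the coordinate subspace spanned by the standard basis vectors indexed by the rows of block-rows $k+1,\dots,s$, a subspace of dimension $\sum_{t=k+1}^{s}\lambda_t<|C|$; hence these columns are linearly dependent. (When $k=s$ this subspace is $\{0\}$, i.e.\ the columns in $C$ are already zero.) In either case $\det N=0$, whence $D(\boldsymbol\lambda)=0$.

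I expect the only real difficulty to be the combinatorial bookkeeping: reading the zero-pattern \eqref{eqs:derivatives} as a statement about whole columns rather than single entries, and converting $\overline{\boldsymbol\mu}\prec_{\operatorname*{lex}}\boldsymbol\lambda$ into the count $|C|>\sum_{t>k}\lambda_t$ via the identities $\sum_i\overline{\mu}_i=n$ and $\overline{\mu}_i=\#\{j:\mu_j\ge i\}$. As in the proof of Lemma~\ref{lem:D_mu_not_zero}, it is probably cleanest to first run a small example — e.g.\ $\operatorname*{mult}(F)=(3,2)$ with $\boldsymbol\lambda=(3,1,1)$ or $(4,1)$, where $\overline{\boldsymbol\mu}=(2,2,1)$ and $k=1$ — and then give the general count.
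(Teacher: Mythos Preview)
Your proof is correct and takes essentially the same approach as the paper: both apply Lemma~\ref{lem:D_mu_in_mroots}, use the zero-pattern~\eqref{eqs:derivatives} to see that certain columns of the numerator matrix are supported on too few block-rows, and extract the requisite row/column count mismatch from the lex hypothesis $\overline{\boldsymbol\mu}\prec_{\operatorname*{lex}}\boldsymbol\lambda$. The only cosmetic difference is that the paper explicitly rearranges columns into a block anti-triangular form and exhibits a square block $T$ with a zero column, whereas you phrase the same observation as a direct dimension argument on the column set $C$; the underlying combinatorics (comparing $\sum_{i\le k}\overline{\mu}_i$ with $\sum_{i\le k}\lambda_i$, equivalently $\sum_{i>k}\overline{\mu}_i$ with $\sum_{i>k}\lambda_i$) is identical.
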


\begin{proof}
In order to convey the main underlying ideas, we will show the proof for a
particular case first. After that, we will generalize the ideas to arbitrary cases.

\bigskip

\noindent\textbf{Particular case:} Consider the case $n=5$ and
$\operatorname*{mult}(F)=\boldsymbol{\mu}=\left(  3,2\right) $. Let
$\boldsymbol{\gamma}=\bar{\boldsymbol{\mu}}=(2,2,1)$ and $\boldsymbol{\lambda
}=(3,1,1)$. Obviously, $\bar{\boldsymbol{\mu}}\prec_{\operatorname*{lex}%
}\boldsymbol{\lambda}$. We will show that $D(\boldsymbol{\lambda})=0$.

\begin{enumerate}
\item Assume that $r_{1}$ and $r_{2}$ are the two distinct roots with
multiplicities $3$ and $2,$ respectively. In other words, $F=a_{5}%
(x-r_{1})^{3}(x-r_{2})^{2}$.

\item By Lemma \ref{lem:D_mu_in_mroots},
\begin{align*}
D(\boldsymbol{\lambda})  & =\frac{c\cdot\left|
\setlength{\arraycolsep}{8.0pt}
\begin{array}
[c]{lcl|cc}%
({F^{(1)}x^{2}})^{(0)}(r_{1}) & ({F^{(1)}x^{2}})^{(1)}(r_{1}) & ({F^{(1)}%
x^{1}})^{(2)}(r_{1}) & ({F^{(1)}x^{2}})^{(0)}(r_{2}) & ({F^{(1)}x^{2}}%
)^{(1)}(r_{2})\\
({F^{(1)}x^{1}})^{(0)}(r_{1}) & ({F^{(1)}x^{1}})^{(1)}(r_{1}) & ({F^{(1)}%
x^{1}})^{(2)}(r_{1}) & ({F^{(1)}x^{1}})^{(0)}(r_{2}) & ({F^{(1)}x^{1}}%
)^{(1)}(r_{2})\\
({F^{(1)}x^{0}})^{(0)}(r_{1}) & ({F^{(1)}x^{0}})^{(1)}(r_{1}) & ({F^{(1)}%
x^{0}})^{(2)}(r_{1}) & ({F^{(1)}x^{0}})^{(0)}(r_{2}) & ({F^{(1)}x^{0}}%
)^{(1)}(r_{2})\\\hline
({F^{(2)}x^{0}})^{(0)}(r_{1}) & ({F^{(2)}x^{0}})^{(1)}(r_{1}) & ({F^{(2)}%
x^{0}})^{(2)}(r_{1}) & ({F^{(2)}x^{0}})^{(0)}(r_{2}) & ({F^{(2)}x^{0}}%
)^{(1)}(r_{2})\\\hline
({F^{(3)}x^{0}})^{(0)}(r_{1}) & ({F^{(3)}x^{0}})^{(1)}(r_{1}) & ({F^{(3)}%
x^{0}})^{(2)}(r_{1}) & ({F^{(3)}x^{0}})^{(0)}(r_{2}) & ({F^{(3)}x^{0}}%
)^{(1)}(r_{2})
\end{array}
\right|  }{(r_{1}-r_{2})^{6}}%
\end{align*}
where
\[
c=\pm1/\big[(0!\cdot1!\cdot2!)\cdot(0!\cdot1!)]\cdot a_{5}^{1}=\pm a_{5}/2.
\]

\item Recall \eqref{eqs:derivatives_of_ri}. Then we immediately have {\small
\[
\setlength{\arraycolsep}{1.0pt}
\begin{array}
[c]{l}%
({F^{(1)}x^{2}})^{(0)}(r_{1})=0,\ ({F^{(1)}x^{2}})^{(1)}(r_{1}%
)=0,\ ({F^{(1)}x^{2}})^{(2)}(r_{1})=F^{(3)}(r_{1})r_{1}^{2},\ ({F^{(1)}%
x^{2}})^{(0)}(r_{2})=0,\ ({F^{(1)}x^{2}})^{(1)}(r_{2})=F^{(2)}(r_{2}%
)r_{2}^{2},\\[2pt]%
({F^{(1)}x^{1}})^{(0)}(r_{1})=0,\ ({F^{(1)}x^{1}})^{(1)}(r_{1}%
)=0,\ ({F^{(1)}x^{1}})^{(2)}(r_{1})=F^{(3)}(r_{1})r_{1}^{1},\ ({F^{(1)}%
x^{1}})^{(0)}(r_{2})=0,\ ({F^{(1)}x^{1}})^{(1)}(r_{2})=F^{(2)}(r_{2}%
)r_{2}^{1},\\[2pt]%
({F^{(1)}x^{0}})^{(0)}(r_{1})=0,\ ({F^{(1)}x^{0}})^{(1)}(r_{1}%
)=0,\ ({F^{(1)}x^{0}})^{(2)}(r_{1})=F^{(3)}(r_{1})r_{1}^{0},\ ({F^{(1)}%
x^{0}})^{(0)}(r_{2})=0,\ ({F^{(1)}x^{0}})^{(1)}(r_{2})=F^{(2)}(r_{2}%
)r_{2}^{0},\\[2pt]%
({F^{(2)}x^{0}})^{(0)}(r_{1})=0,\ ({F^{(2)}x^{0}})^{(1)}(r_{1})=F^{(3)}%
(r_{1})r_{1}^{0}, \hspace{9.5em} ({F^{(2)}x^{0}})^{(0)}(r_{2})={F^{(2)}%
(r_{2})r_{2}^{0}},\\[2pt]%
({F^{(3)}x^{0}})^{(0)}(r_{1})=F^{(3)}(r_{1})r_{1}^{0}.%
\end{array}
\]
}

Therefore,
\[
D(\boldsymbol{\lambda}) =\frac{c\cdot\left|
\begin{array}
[c]{ccc|cc}%
0 & 0 & F^{(3)}(r_{1})r_{1}^{2} & 0 & F^{(2)}(r_{2})r_{2}^{2}\\
0 & 0 & F^{(3)}(r_{1})r_{1}^{1} & 0 & F^{(2)}(r_{2})r_{2}^{1}\\
0 & 0 & F^{(3)}(r_{1})r_{1}^{0} & 0 & F^{(2)}(r_{2})r_{2}^{0}\\\hline
0 & F^{(3)}(r_{1})r_{1}^{0} & \cdot & F^{(2)}(r_{2})r_{2}^{0} & \cdot\\\hline
F^{(3)}(r_{1})r_{1}^{0} & \cdot & \cdot & \cdot & \cdot
\end{array}
\right|  }{(r_{1}-r_{2})^{6}}.%
\]

\item By rearranging the columns of the determinant in the numerator, we have
\begin{align*}
D(\boldsymbol{\lambda}) &  =\pm\frac{c\cdot\left|
\begin{array}
[c]{c|cc|cc}
&  &  & F^{(3)}(r_{1})r_{1}^{2} & F^{(2)}(r_{2})r_{2}^{2}\\
&  &  & F^{(3)}(r_{1})r_{1}^{1} & F^{(2)}(r_{2})r_{2}^{1}\\
&  &  & F^{(3)}(r_{1})r_{1}^{0} & F^{(2)}(r_{2})r_{2}^{0}\\\hline
& F^{(3)}(r_{1})r_{1}^{0} & F^{(2)}(r_{2})r_{2}^{0} & \cdot & \cdot\\\hline
F^{(3)}(r_{1})r_{1}^{0} & \cdot & \cdot & \cdot & \cdot
\end{array}
\right|  }{(r_{1}-r_{2})^{6}}\\
& =\pm\frac{c\cdot\left|
\begin{array}
[c]{ccc}
&  & M_{1}\\
& M_{2} & \cdot\\
M_{3} & \cdot & \cdot
\end{array}
\right| }{(r_{1}-r_{2})^{6}}%
\end{align*}
where
\[
M_{1}= \left[
\begin{array}
[c]{ll}%
F^{(3)}(r_{1})r_{1}^{2} & F^{(2)}(r_{2})r_{2}^{2}\\
F^{(3)}(r_{1})r_{1}^{1} & F^{(2)}(r_{2})r_{2}^{1}\\
F^{(3)}(r_{1})r_{1}^{0} & F^{(2)}(r_{2})r_{2}^{0}%
\end{array}
\right],  \quad M_{2}= \left[
\begin{array}
[c]{ll}%
F^{(3)}(r_{1})r_{1}^{0} & F^{(2)}(r_{2})r_{2}^{0}%
\end{array}
\right],  \quad M_{3}= \left[
\begin{array}
[c]{l}%
F^{(3)}(r_{1})r_{1}^{0}%
\end{array}
\right].
\]

\item We repartition the columns so that the reverse diagonal consists of two
square matrices and obtain the following:
\begin{align*}
D(\boldsymbol{\lambda}) &  =\pm\frac{c\cdot\left|
\begin{array}
[c]{cc|ccc}
&  &  & F^{(3)}(r_{1})r_{1}^{2} & F^{(2)}(r_{2})r_{2}^{2}\\
&  &  & F^{(3)}(r_{1})r_{1}^{1} & F^{(2)}(r_{2})r_{2}^{1}\\
&  &  & F^{(3)}(r_{1})r_{1}^{0} & F^{(2)}(r_{2})r_{2}^{0}\\\hline
& F^{(3)}(r_{1})r_{1}^{0} & F^{(2)}(r_{2})r_{2}^{0} & \cdot & \cdot\\
F^{(3)}(r_{1})r_{1}^{0} & \cdot & \cdot & \cdot & \cdot
\end{array}
\right|  }{(r_{1}-r_{2})^{6}}\\
& =\pm\frac{c\cdot\left|
\begin{array}
[c]{cc}
& T\\
B & \cdot
\end{array}
\right|  }{(r_{1}-r_{2})^{6}}%
\end{align*}
where the size of the square matrix $T$ is $\lambda_{1}=3$, namely,
\[
T=\left[
\begin{array}
[c]{cc}%
0 & M_{1}%
\end{array}
\right],
\]
where $0$ is the $\lambda_{1}\times(\lambda_{1}-\gamma_{1})$ matrix.

\item Since $\lambda_{1}-\gamma_{1}=3-2>0$, the first column of $T$ is all
zeros. Hence $|T|=0$ and in turn $D(\boldsymbol{\lambda}) =0$.
\end{enumerate}

\medskip

\noindent\textbf{Arbitrary case.} Now we generalize the above ideas to
arbitrary cases.

\begin{enumerate}
\item Let $\boldsymbol{\mu}=(\mu_{1},\ldots,\mu_{m})$. Assume that
$r_{1},\ldots,r_{m}$ are the $m$ distinct roots of $F$ with multiplicities
$\mu_{1},\ldots,\mu_{m} $ respectively. In other words, $F=a_{n}(x-r_{1}%
)^{\mu_{1}}\cdots(x-r_{m})^{\mu_{m}}$.

\item Let $\boldsymbol{\gamma}=\bar{\boldsymbol{\mu}}=(\gamma_{1}%
,\ldots,\gamma_{s})$. By the definition of conjugate, $\gamma_{i}=\#\{\mu
_{j}:\,\mu_{j}\ge i\}$. Note that $\gamma_{1}=m$ and $s=\mu_{1}$ since
$\mu_{1}\ge\cdots\ge\mu_{m}\ge1$.

\item Consider $\boldsymbol{\lambda}=(\lambda_{1},\ldots,\lambda_{t}%
)\in\mathcal{M}(n)$ such that $\boldsymbol{\gamma}\prec_{\text{lex}%
}\boldsymbol{\lambda}$. By Lemma \ref{lem:D_mu_in_mroots}, we have
\begin{equation}
\label{eq:D_lambda}D(\boldsymbol{\lambda})=\frac{c\cdot{\footnotesize \left|
\setlength{\arraycolsep}{1.0pt}%
\begin{array}
[c]{lcl|c|lcl}%
({F^{(1)}x^{\lambda_{1}-1}})^{(0)}(r_{1}) & \cdots & ({F^{(1)}x^{\lambda
_{1}-1}})^{(\mu_{1}-1)}(r_{1}) & \cdots\cdots & ({F^{(1)}x^{\lambda_{1}-1}%
})^{(0)}(r_{m}) & \cdots & ({F^{(1)}x^{\lambda_{1}-1}})^{(\mu_{m}-1)}(r_{m})\\[-2pt]
~~~~~~~~\vdots &  & ~~~~~~~~\vdots &  & ~~~~~~~~\vdots &  & ~~~~~~~~\vdots\\[-2pt]
({F^{(1)}x^{0}})^{(0)}(r_{1}) & \cdots & ({F^{(1)}x^{0}})^{(\mu_{1}-1)}%
(r_{1}) & \cdots\cdots & ({F^{(1)}x^{0}})^{(0)}(r_{m}) & \cdots &
({F^{(1)}x^{0}})^{(\mu_{m}-1)}(r_{m})\\[-2pt]\hline
~~~~~~~~\vdots &  & ~~~~~~~~\vdots &  & ~~~~~~~~\vdots &  & ~~~~~~~~\vdots
\\[-2pt]\hline
({F^{(t)}}x^{\lambda_{t}-1})^{(0)}(r_{1}) & \cdots & ({F}^{(t)}x^{\lambda
_{t}-1})^{(\mu_{1}-1)}(r_{1}) & \cdots\cdots & ({F^{(t)}}x^{\lambda_{t}%
-1})^{(0)}(r_{m}) & \cdots & ({F}^{(t)}x^{\lambda_{t}-1})^{(\mu_{m}-1)}%
(r_{m})\\[-2pt]
~~~~~~~~\vdots &  & ~~~~~~~~\vdots &  & ~~~~~~~~\vdots &  & ~~~~~~~~\vdots\\[-2pt]
({F^{(t)}}x^{0})^{(0)}(r_{1}) & \cdots & ({F^{(t)}}x^{0})^{(\mu_{1}-1)}%
(r_{1}) & \cdots\cdots & ({F^{(t)}}x^{0})^{(0)}(r_{m}) & \cdots & ({F^{(t)}%
}x^{0})^{(\mu_{m}-1)}(r_{m})
\end{array}
\right|}  }{\prod_{i<j}(r_{i}-r_{j})^{\mu_{i}\mu_{j}}}%
\end{equation}
where $c=\pm1\Big/\left( \prod_{i=1}^{m}\prod_{j=0}^{\mu_{i}-1}j!\right) \cdot
a_{n}^{\lambda_{1}-2}$.

\item Recall \eqref{eqs:derivatives} and plug \eqref{eqs:derivatives} into
\eqref{eq:D_lambda}. Then we get
\[
D(\boldsymbol{\gamma})=\frac{c\cdot{\scriptsize \left|
\setlength{\arraycolsep}{1.0pt}%
\begin{array}
[c]{lcll|c|lcll|l}%
~~~~~~~~0 & \cdots & ~~~~~~~~0 & F^{(\mu_{1})}(r_{1})r_{1}^{\lambda_{1}-1} &
\cdots & ~~~~~~~~0 & \cdots & ~~~~~~~~0 & \quad F^{(\mu_{i})}(r_{i}%
)r_{i}^{\lambda_{1}-1} & \cdots\\[-2pt]
~~~~~~~~\vdots &  & ~~~~~~~~\vdots & ~~~~~~~~\vdots &  & ~~~~~~~~\vdots &  &
~~~~~~~~\vdots & ~~~~~~~~\vdots & \\[-2pt]
~~~~~~~~0 & \cdots & ~~~~~~~~0 & F^{(\mu_{1})}(r_{1})r_{1}^{0} & \cdots &
~~~~~~~~0 & \cdots & ~~~~~~~~0 & \quad F^{(\mu_{i})}(r_{i})r_{i}^{0} &
\cdots\\[-2pt]\hline
~~~~~~~~0 & \cdots & F^{(\mu_{1})}(r_{1})r_{1}^{\lambda_{2}-1} &
~~~~~~~~\,\cdot & \cdots & ~~~~~~~~0 & \cdots & F^{(\mu_{i})}(r_{i}%
)r_{i}^{\lambda_{2}-1} & ~~~~~~~~\,\cdot & \cdots\\[-2pt]
~~~~~~~~\vdots &  & ~~~~~~~~\vdots & ~~~~~~~~\vdots &  & ~~~~~~~~\vdots &  &
~~~~~~~~\vdots & ~~~~~~~~\vdots & \\[-2pt]
~~~~~~~~0 & \cdots & F^{(\mu_{1})}(r_{1})r_{1}^{0} & ~~~~~~~~\,\cdot & \cdots
& ~~~~~~~~0 & \cdots & F^{(\mu_{i})}(r_{i})r_{i}^{0} & ~~~~~~~~\,\cdot &
\cdots\\[-2pt]\hline
~~~~~~~~\vdots &  & ~~~~~~~~\vdots & ~~~~~~~~\vdots &  & ~~~~~~~~\vdots &  &
~~~~~~~~\vdots & ~~~~~~~~\vdots & \\[-2pt]\hline
~~~~~~~~0 &  & ~~~~~~~~\,\cdot & ~~~~~~~~\cdot & \cdots & F^{(\mu_{i})}%
(r_{i})r_{i}^{\lambda_{\mu_{i}}-1} &  & ~~~~~~~~\,\cdot & ~~~~~~~~\,\cdot &
\cdots\\[-2pt]
~~~~~~~~\vdots &  & ~~~~~~~~\vdots & ~~~~~~~~\vdots &  & ~~~~~~~~\vdots &  &
~~~~~~~~\vdots & ~~~~~~~~\vdots & \\[-2pt]
~~~~~~~~0 &  & ~~~~~~~~\,\cdot & ~~~~~~~~\cdot & \cdots & F^{(\mu_{i})}%
(r_{i})r_{i}^{0} &  & ~~~~~~~~\,\cdot & ~~~~~~~~\,\cdot & \cdots\\\hline
~~~~~~~~\vdots &  & ~~~~~~~~\vdots & ~~~~~~~~\vdots &  & ~~~~~~~~\vdots &  &
~~~~~~~~\vdots & ~~~~~~~~\vdots & \\[-2pt]\hline
F^{(\mu_{1})}(r_{1})r_{1}^{\lambda_{\mu_{1}}-1} & \cdots & ~~~~~~~~\,\cdot &
~~~~~~~~\,\cdot & \cdots & ~~~~~~~~\,\cdot & \cdots & ~~~~~~~~\,\cdot &
~~~~~~~~\,\cdot & \cdots\\[-2pt]
~~~~~~~~\vdots &  & ~~~~~~~~\vdots & ~~~~~~~~\vdots &  & ~~~~~~~~\vdots &  &
~~~~~~~~\vdots & ~~~~~~~~\vdots & \\[-2pt]
F^{(\mu_{1})}(r_{1})r_{1}^{0} &  & ~~~~~~~~\,\cdot & ~~~~~~~~\,\cdot & \cdots
& ~~~~~~~~\,\cdot & \cdots & ~~~~~~~~\,\cdot & ~~~~~~~~\,\cdot & \cdots
\end{array}
\right| } }{\prod\limits_{1\leq i<j\leq m}(r_{i}-r_{j})^{\mu_{i}\mu_{j}}}.%
\]

\item By rearranging the columns of the determinant in the numerator, we have
\begin{align*}
D(\boldsymbol{\lambda}) & =\pm\frac{c\cdot{\scriptsize \left|
\setlength{\arraycolsep}{1.0pt}
\begin{array}
[c]{lcl|ccc|lcl|lcl}
&  &  &  &  &  &  &  &  & F^{(\mu_{1})}(r_{1})r_{1}^{\lambda_{1}-1} & \cdots &
F^{(\mu_{\gamma_{1}})}(r_{\gamma_{1}})r_{\gamma_{1}}^{\lambda_{1}-1}\\
&  &  &  &  &  &  &  &  & ~~~~~~~~\vdots &  & ~~~~~~~~\vdots\\
&  &  &  &  &  &  &  &  & F^{(\mu_{1})}(r_{1})r_{1}^{0} & \cdots &
F^{(\mu_{\gamma_{1}})}(r_{\gamma_{1}})r_{\gamma_{1}}^{0}\\\hline
&  &  &  &  &  & F^{(\mu_{1})}(r_{1})r_{1}^{\lambda_{2}-1} & \cdots &
F^{(\mu_{\gamma_{2}})} (r_{\gamma_{2}})r_{\gamma_{2}}^{\lambda_{2}-1} &
~~~~~~~~\,\cdot & \cdots & ~~~~~~~~\,\cdot\\
&  &  &  &  &  & ~~~~~~~~\vdots &  & ~~~~~~~~\vdots & ~~~~~~~~\vdots &  &
~~~~~~~~\vdots\\
&  &  &  &  &  & F^{(\mu_{1})}(r_{1})r_{1}^{0} & \cdots & F^{(\mu_{\gamma_{2}%
})}(r_{\gamma_{2}})r_{\gamma_{2}}^{0} & ~~~~~~~~\,\cdot & \cdots &
~~~~~~~~\,\cdot\\\hline
&  &  & \cdot & \cdots & \cdot & ~~~~~~~~\,\cdot & \cdots & ~~~~~~~~\,\cdot &
~~~~~~~~\,\cdot & \cdots & ~~~~~~~~\,\cdot\\
&  &  & \vdots &  & \vdots & ~~~~~~~~\vdots &  & ~~~~~~~~\vdots &
~~~~~~~~\vdots &  & ~~~~~~~~\vdots\\
&  &  & \cdot & \cdots & \cdot & ~~~~~~~~\,\cdot & \cdots & ~~~~~~~~\,\cdot &
~~~~~~~~\,\cdot & \cdots & ~~~~~~~~\,\cdot\\\hline
F^{(\mu_{1})}(r_{1})r_{1}^{\lambda_{t}-1} & \cdots & F^{(\mu_{\gamma_{s}}%
)}(r_{\gamma_{s}})r_{\gamma_{s}}^{\lambda_{t}-1} & \cdot & \cdots & \cdot &
~~~~~~~~\,\cdot & \cdots & ~~~~~~~~\,\cdot & ~~~~~~~~\,\cdot & \cdots &
~~~~~~~~\,\cdot\\
~~~~~~~~\vdots &  & ~~~~~~~~\vdots & \vdots &  & \vdots & ~~~~~~~~\vdots &  &
~~~~~~~~\vdots & ~~~~~~~~\vdots &  & ~~~~~~~~\vdots\\
F^{(\mu_{1})}(r_{1})r_{1}^{0} & \cdots & F^{(\mu_{\gamma_{s}})}(r_{\gamma_{s}%
})r_{\gamma_{s}}^{0} & \cdot & \cdots & \cdot & ~~~~~~~~\,\cdot & \cdots &
~~~~~~~~\,\cdot & ~~~~~~~~\,\cdot & \cdots & ~~~~~~~~\,\cdot
\end{array}
\right| } } {\prod\limits_{1\leq i<j\leq m}(r_{i}-r_{j})^{\mu_{i}\mu_{j}}}\\
& =\pm\frac{c\cdot\left|
\begin{array}
[c]{cccc}
&  &  & M_{1}\\
&  & M_{2} & \cdot\\[3pt]
& \!\!\!\!\!\!\!\begin{sideways}$\ddots$\end{sideways} &  & \cdot\\
M_{\mu_{1}} & \cdots & \cdot & \cdot
\end{array}
\right| }{\prod\limits_{1\leq i<j\leq m}(r_{i}-r_{j})^{\mu_{i}\mu_{j}}}%
\end{align*}
where $M_{i}$ is $\lambda_{i}$ by $\gamma_{i}$.

\item Since $\boldsymbol{\gamma}\prec_{\text{lex}}\boldsymbol{\lambda}$, there
exists $\ell$ such that $\gamma_{j}=\lambda_{j}$ for $j<\ell$ and
$\gamma_{\ell}<\lambda_{\ell}$. Thus
\[
\gamma_{1}+\cdots+\gamma_{\ell}<\lambda_{1}+\cdots+\lambda_{\ell}.%
\]

\item We repartition the numerator matrix so that the reverse diagonal
consists of two square matrices $T$ and $B$ as follows:
\[
D(\boldsymbol{\lambda})=\pm\frac{c\cdot\left|
\begin{array}
[c]{cc}
& T\\
B &{\cdot}
\end{array}
\right| }{\prod\limits_{1\leq i<j\leq m}(r_{i}-r_{j})^{\mu_{i}\mu_{j}}}%
\]
where the size of the square matrix $T$ is $\lambda_{1}+\cdots+\lambda_{\ell}%
$, namely,
\[
T=\left[
\begin{array}
[c]{c|ccc}
&  &  & M_{1}\\
0~ &  & \begin{sideways}$\ddots$\end{sideways} & \vdots\\
& M_{\ell} & \cdots & \cdot
\end{array}
\right],
\]
where again $0$ is the $\gamma_{\ell}\times p$ and $p=(\lambda_{1}%
+\cdots+\lambda_{\ell})-(\gamma_{1}+\cdots+\gamma_{\ell})$.

\item Obviously,
\[
D(\boldsymbol{\lambda})=\pm\frac{c\cdot\left|  T\right| \cdot\left|  B\right|
}{\prod\limits_{1\leq i<j\leq m}(r_{i}-r_{j})^{\mu_{i}\mu_{j}}}.%
\]

\item Since $p>0$, the first column of $T$ is all zeros. Hence $\left|
T\right| =0$, which implies that $D(\boldsymbol{\lambda})=0$.
\end{enumerate}
\end{proof}

\subsection{Proof of Theorem \ref{thm:main_result}}

Now we are ready to prove Theorem \ref{thm:main_result}.

\begin{proof}
[Proof of Theorem \ref{thm:main_result}]\ Theorem \ref{thm:main_result} is
equivalent to the following claim: let%
\[
\boldsymbol{\delta}=\max\limits_{\substack{\boldsymbol{\gamma}\in
\mathcal{M}(n) \\D(\boldsymbol{\gamma})\ne0}}\boldsymbol{\gamma}%
\]
where $\max$ is with respect to the lexicographic ordering $\prec
_{\operatorname*{lex}}$. Then $\operatorname*{mult}(F)=\overline
{\boldsymbol{\delta}}$.

Next we will show the correctness of the claim.

\begin{enumerate}
\item Assume that $\text{mult}(F)=\boldsymbol{\mu}$. We will show
$\boldsymbol{\mu}=\overline{\boldsymbol{\delta}}$ by disproving
{$\boldsymbol{\delta}\prec_{\text{lex}}\overline{\boldsymbol{\mu}}$}
and {$\overline{\boldsymbol{\mu}}~\prec_{\text{lex}}\boldsymbol{\delta}$}.

\item If
{$\boldsymbol{\delta}\prec_{\text{lex}}\overline{\boldsymbol{\mu}}$, then by}
the condition for determining $\boldsymbol{\delta}$, we immediately have
$D(\overline{\boldsymbol{\mu}}) =0$, leading to a contradiction with Lemma
\ref{lem:D_mu_not_zero}.

\item
If {$\overline{\boldsymbol{\mu}}\prec_{\text{lex}}\boldsymbol{\delta} $,
then by} Lemma \ref{lem:D_lambda_all_zeros}, $D(\boldsymbol{\delta}) =0.$
However, it contradicts the condition for determining $\boldsymbol{\delta}$.

\item Therefore, the only possibility is $\boldsymbol{\mu}=\overline
{\boldsymbol{\delta}}$.
\end{enumerate}
\end{proof}

\section{Comparison}

\label{sec:comparison}

In this section, we compare the multiplicity discriminant condition given by
Theorem \ref{thm:main_result} (mentioned as HY22 hereinafter) and that given
by a complex root version of YHZ's condition \cite{1996_Yang_Hou_Zeng} as well
as the one given by the authors in \cite[Theorem 6]{2021_Hong_Yang} (mentioned
as HY21 hereinafter). In particular, we will make comparison on the forms and
the maximum degrees of discriminants appearing in the conditions.

\subsection{Form of discriminants}

We will illustrate the forms of conditions generated by the three methods for
a fixed $\boldsymbol{\mu}$. For example, we consider the polynomial
$F=a_{5}x^{5}+a_{4}x^{4}+a_{3}x^{3}+a_{2}x^{2}+a_{1}x+a_{0}$ and
$\boldsymbol{\mu}=(2,2,1)$. The condition for $F$ having the multiplicity
structure $\boldsymbol{\mu}$ is given as follows:

\begin{enumerate}
\item YHZ's condition: $P_{1}=0\wedge P_{2}=0\wedge P_{3}\neq0$ where
\begin{align*}
P_{1}= &  \left\vert
\begin{array}
[c]{ccccccccc}%
a_{5} & a_{4} & a_{3} & a_{2} & a_{1} & a_{0} &  &  & \\[-2pt]
& a_{5} & a_{4} & a_{3} & a_{2} & a_{1} & a_{0} &  & \\[-2pt]
&  & a_{5} & a_{4} & a_{3} & a_{2} & a_{1} & a_{0} & \\[-2pt]
&  &  & a_{5} & a_{4} & a_{3} & a_{2} & a_{1} & a_{0}\\[-2pt]
5a_{5} & 4a_{4} & 3a_{3} & 2a_{2} & a_{1} &  &  &  & \\[-2pt]
& 5a_{5} & 4a_{4} & 3a_{3} & 2a_{2} & a_{1} &  &  & \\[-2pt]
&  & 5a_{5} & 4a_{4} & 3a_{3} & 2a_{2} & a_{1} &  & \\[-2pt]
&  &  & 5a_{5} & 4a_{4} & 3a_{3} & 2a_{2} & a_{1} & \\[-2pt]
&  &  &  & 5a_{5} & 4a_{4} & 3a_{3} & 2a_{2} & a_{1}%
\end{array}
\right\vert, \\[5pt]
P_{2}= &  \left\vert
\begin{array}
[c]{ccccccc}%
a_{5} & a_{4} & a_{3} & a_{2} & a_{1} & a_{0} & \\[-2pt]
& a_{5} & a_{4} & a_{3} & a_{2} & a_{1} & a_{0}\\[-2pt]
&  & a_{5} & a_{4} & a_{3} & a_{2} & a_{1}\\[-2pt]
5a_{5} & 4a_{4} & 3a_{3} & 2a_{2} & a_{1} &  & \\[-2pt]
& 5a_{5} & 4a_{4} & 3a_{3} & 2a_{2} & a_{1} & \\[-2pt]
&  & 5a_{5} & 4a_{4} & 3a_{3} & 2a_{2} & a_{1}\\[-2pt]
&  &  & 5a_{5} & 4a_{4} & 3a_{3} & 2a_{2}%
\end{array}
\right\vert, \\[5pt]
P_{3}= &  \left\vert
\begin{array}
[c]{rrr}%
\left\vert
\begin{array}
[c]{ccccc}%
a_{5} & a_{4} & a_{3} & a_{2} & a_{1}\\[-2pt]
& a_{5} & a_{4} & a_{3} & a_{2}\\[-2pt]
5a_{5} & 4a_{4} & 3a_{3} & 2a_{2} & a_{1}\\[-2pt]
& 5a_{5} & 4a_{4} & 3a_{3} & 2a_{2}\\[-2pt]
&  & 5a_{5} & 4a_{4} & 3a_{3}%
\end{array}
\right\vert  & \left\vert
\begin{array}
[c]{ccccc}%
a_{5} & a_{4} & a_{3} & a_{2} & a_{0}\\[-2pt]
& a_{5} & a_{4} & a_{3} & a_{1}\\[-2pt]
5a_{5} & 4a_{4} & 3a_{3} & 2a_{2} & \\[-2pt]
& 5a_{5} & 4a_{4} & 3a_{3} & a_{1}\\[-2pt]
&  & 5a_{5} & 4a_{4} & 2a_{2}%
\end{array}
\right\vert  & \left\vert
\begin{array}
[c]{ccccc}%
a_{5} & a_{4} & a_{3} & a_{2} & \\[-2pt]
& a_{5} & a_{4} & a_{3} & a_{0}\\[-2pt]
5a_{5} & 4a_{4} & 3a_{3} & 2a_{2} & \\[-2pt]
& 5a_{5} & 4a_{4} & 3a_{3} & \\[-2pt]
&  & 5a_{5} & 4a_{4} & a_{1}%
\end{array}
\right\vert \\[35pt]%
2\left\vert
\begin{array}
[c]{ccccc}%
a_{5} & a_{4} & a_{3} & a_{2} & a_{1}\\[-2pt]
& a_{5} & a_{4} & a_{3} & a_{2}\\[-2pt]
5a_{5} & 4a_{4} & 3a_{3} & 2a_{2} & a_{1}\\[-2pt]
& 5a_{5} & 4a_{4} & 3a_{3} & 2a_{2}\\[-2pt]
&  & 5a_{5} & 4a_{4} & 3a_{3}%
\end{array}
\right\vert  & \left\vert
\begin{array}
[c]{ccccc}%
a_{5} & a_{4} & a_{3} & a_{2} & a_{0}\\[-2pt]
& a_{5} & a_{4} & a_{3} & a_{1}\\[-2pt]
5a_{5} & 4a_{4} & 3a_{3} & 2a_{2} & \\[-2pt]
& 5a_{5} & 4a_{4} & 3a_{3} & a_{1}\\[-2pt]
&  & 5a_{5} & 4a_{4} & 2a_{2}%
\end{array}
\right\vert  & \\[35pt]
& 2\left\vert
\begin{array}
[c]{ccccc}%
a_{5} & a_{4} & a_{3} & a_{2} & a_{1}\\[-2pt]
& a_{5} & a_{4} & a_{3} & a_{2}\\[-2pt]
5a_{5} & 4a_{4} & 3a_{3} & 2a_{2} & a_{1}\\[-2pt]
& 5a_{5} & 4a_{4} & 3a_{3} & 2a_{2}\\[-2pt]
&  & 5a_{5} & 4a_{4} & 3a_{3}%
\end{array}
\right\vert
& \left\vert
\begin{array}
[c]{ccccc}%
a_{5} & a_{4} & a_{3} & a_{2} & a_{0}\\[-2pt]
& a_{5} & a_{4} & a_{3} & a_{1}\\[-2pt]
5a_{5} & 4a_{4} & 3a_{3} & 2a_{2} & \\[-2pt]
& 5a_{5} & 4a_{4} & 3a_{3} & a_{1}\\[-2pt]
&  & 5a_{5} & 4a_{4} & 2a_{2}%
\end{array}
\right\vert
\end{array}
\right\vert.
\end{align*}

\item HY21's condition: ${Q_{1}=0\wedge Q_{2}=0}\wedge Q_{3}\neq0\wedge
Q_{4}\neq0$ where $Q_1=P_1$, $Q_2=P_2$ and
\begin{align*}
\hspace{3.2em} Q_{3}= &  \left\vert
\begin{array}
[c]{ccccc}%
a_{5} & a_{4} & a_{3} & a_{2} & a_{1}\\[-2pt]
& a_{5} & a_{4} & a_{3} & a_{2}\\[-2pt]
5a_{5} & 4a_{4} & 3a_{3} & 2a_{2} & a_{1}\\[-2pt]
& 5a_{5} & 4a_{4} & 3a_{3} & 2a_{2}\\[-2pt]
&  & 5a_{5} & 4a_{4} & 3a_{3}%
\end{array}
\right\vert, \\
Q_{4}= &  \left\vert
\begin{array}
[c]{ccccccccc}%
a_{5} & a_{4} & a_{3} & a_{2} & a_{1} & a_{0} &  &  & \\[-2pt]
& a_{5} & a_{4} & a_{3} & a_{2} & a_{1} & a_{0} &  & \\[-2pt]
&  & a_{5} & a_{4} & a_{3} & a_{2} & a_{1} & a_{0} & \\[-2pt]
&  &  & a_{5} & a_{4} & a_{3} & a_{2} & a_{1} & a_{0}\\[-2pt]
& 10a_{5} & 6a_{4} & 3a_{3} & a_{2} &  &  &  & \\[-2pt]
&  & 10a_{5} & 6a_{4} & 3a_{3} & a_{2} &  &  & \\[-2pt]
&  &  & 10a_{5} & 6a_{4} & 3a_{3} & a_{2} &  & \\[-2pt]
&  &  &  & 10a_{5} & 6a_{4} & 3a_{3} & a_{2}\  & \\[-2pt]
&  &  &  & 5a_{5} & 4a_{4} & 3a_{3} & 2a_{2} & a_{1}%
\end{array}
\right\vert +\left\vert
\begin{array}
[c]{ccccccccc}%
a_{5} & a_{4} & a_{3} & a_{2} & a_{1} & a_{0} &  &  & \\[-2pt]
& a_{5} & a_{4} & a_{3} & a_{2} & a_{1} & a_{0} &  & \\[-2pt]
&  & a_{5} & a_{4} & a_{3} & a_{2} & a_{1} & a_{0} & \\[-2pt]
&  &  & a_{5} & a_{4} & a_{3} & a_{2} & a_{1} & a_{0}\\[-2pt]
& 10a_{5} & 6a_{4} & 3a_{3} & a_{2} &  &  &  & \\[-2pt]
&  & 10a_{5} & 6a_{4} & 3a_{3} & a_{2} &  &  & \\[-2pt]
&  &  & 10a_{5} & 6a_{4} & 3a_{3} & a_{2} &  & \\[-2pt]
&  &  & 5a_{5} & 4a_{4} & 3a_{3} & 2a_{2} & a_{1} & \\[-2pt]
&  &  &  &  & 10a_{5} & 6a_{4} & 3a_{3} & a_{2}%
\end{array}
\right\vert \\[5pt]
+ &  \left\vert
\begin{array}
[c]{ccccccccc}%
a_{5} & a_{4} & a_{3} & a_{2} & a_{1} & a_{0} &  &  & \\[-2pt]
& a_{5} & a_{4} & a_{3} & a_{2} & a_{1} & a_{0} &  & \\[-2pt]
&  & a_{5} & a_{4} & a_{3} & a_{2} & a_{1} & a_{0} & \\[-2pt]
&  &  & a_{5} & a_{4} & a_{3} & a_{2} & a_{1} & a_{0}\\[-2pt]
& 10a_{5} & 6a_{4} & 3a_{3} & a_{2} &  &  &  & \\[-2pt]
&  & 10a_{5} & 6a_{4} & 3a_{3} & a_{2} &  &  & \\[-2pt]
&  & 5a_{5} & 4a_{4} & 3a_{3} & 2a_{2} & a_{1} &  & \\[-2pt]
&  &  &  & 10a_{5} & 6a_{4} & 3a_{3} & a_{2} & \\[-2pt]
&  &  &  &  & 10a_{5} & 6a_{4} & 3a_{3} & a_{2}
\end{array}
\right\vert +\left\vert
\begin{array}
[c]{ccccccccc}%
a_{5} & a_{4} & a_{3} & a_{2} & a_{1} & a_{0} &  &  & \\[-2pt]
& a_{5} & a_{4} & a_{3} & a_{2} & a_{1} & a_{0} &  & \\[-2pt]
&  & a_{5} & a_{4} & a_{3} & a_{2} & a_{1} & a_{0} & \\[-2pt]
&  &  & a_{5} & a_{4} & a_{3} & a_{2} & a_{1} & a_{0}\\[-2pt]
& 10a_{5} & 6a_{4} & 3a_{3} & a_{2} &  &  &  & \\[-2pt]
& 5a_{5} & 4a_{4} & 3a_{3} & 2a_{2} & a_{1} &  &  & \\[-2pt]
&  &  & 10a_{5} & 6a_{4} & 3a_{3} & a_{2} &  & \\[-2pt]
&  &  &  & 10a_{5} & 6a_{4} & 3a_{3} & a_{2} & \\[-2pt]
&  &  &  &  & 10a_{5} & 6a_{4} & 3a_{3} & a_{2}%
\end{array}
\right\vert \\[5pt]
+ &  \left\vert
\begin{array}
[c]{ccccccccc}%
a_{5} & a_{4} & a_{3} & a_{2} & a_{1} & a_{0} &  &  & \\[-2pt]
& a_{5} & a_{4} & a_{3} & a_{2} & a_{1} & a_{0} &  & \\[-2pt]
&  & a_{5} & a_{4} & a_{3} & a_{2} & a_{1} & a_{0} & \\[-2pt]
&  &  & a_{5} & a_{4} & a_{3} & a_{2} & a_{1} & a_{0}\\[-2pt]
5a_{5} & 4a_{4} & 3a_{3} & 2a_{2} & a_{1} &  &  &  & \\[-2pt]
&  & 10a_{5} & 6a_{4} & 3a_{3} & a_{2} &  &  & \\[-2pt]
&  &  & 10a_{5} & 6a_{4} & 3a_{3} & a_{2} &  & \\[-2pt]
&  &  &  & 10a_{5} & 6a_{4} & 3a_{3} & a_{2} & \\[-2pt]
&  &  &  &  & 10a_{5} & 6a_{4} & 3a_{3} & a_{2}%
\end{array}
\right\vert.
\end{align*}

\item HY22's condition: ${R_1}=0\wedge R_{2}=0\wedge R_{3}\neq0$ where
$R_1=P_1$ and%
\begin{align*}
R_{2}= &  \frac{1}{a_{5}}\left\vert
\begin{array}
[c]{cccccccc}%
a_{5} & a_{4} & a_{3} & a_{2} & a_{1} & a_{0} &  & \\[-2pt]
& a_{5} & a_{4} & a_{3} & a_{2} & a_{1} & a_{0} & \\[-2pt]
&  & a_{5} & a_{4} & a_{3} & a_{2} & a_{1} & a_{0}\\[-2pt]
5a_{5} & 4a_{4} & 3a_{3} & 2a_{2} & a_{1} &  &  & \\[-2pt]
& 5a_{5} & 4a_{4} & 3a_{3} & 2a_{2} & a_{1} &  & \\[-2pt]
&  & 5a_{5} & 4a_{4} & 3a_{3} & 2a_{2} & a_{1} & \\[-2pt]
&  && 5a_{5} & 4a_{4} & 3a_{3} & 2a_{2} & a_{1} \\[-2pt]
&  &  &  & 20a_{5} & 12a_{4} & 6a_{3} & 2a_{2}%
\end{array}
\right\vert \\[5pt]
R_{3}= &  \frac{1}{a_{5}}\left\vert
\begin{array}
[c]{ccccccc}%
a_{5} & a_{4} & a_{3} & a_{2} & a_{1} & a_{0} & \\
& a_{5} & a_{4} & a_{3} & a_{2} & a_{1} & a_{0}\\
5a_{5} & 4a_{4} & 3a_{3} & 2a_{2} & a_{1} &  & \\
& 5a_{5} & 4a_{4} & 3a_{3} & 2a_{2} & a_{1} & \\
&  & 5a_{5} & 4a_{4} & 3a_{3} & 2a_{2} & a_{1}\\
&  & 20a_{5} & 12a_{4} & 6a_{3} & 2a_{2} & \\
&  &  & 20a_{5} & 12a_{4} & 6a_{3} & 2a_{2}%
\end{array}
\right\vert
\end{align*}
\end{enumerate}

From the above conditions, we make the following observations which
are also true in general.
\begin{enumerate}
\item YHZ's discriminant involves one  nested determinant;

\item HY21's discriminant involves a sum of several non-nested determinants;

\item HY22's discriminant involves one non-nested determinant.
\end{enumerate}

\subsection{Maximum degree of discriminants}

For the sake of simplicity, we use the following short-hands:

\begin{itemize}
\item $d_{\mathrm{YHZ}}$ : the maximum of the degrees of the polynomials
appearing in YHZ's conditions (\cite{1996_Yang_Hou_Zeng});

\item $d_{\mathrm{HY21}}$ : the maximum of the degrees of the polynomials
appearing in HY21's conditions (\cite{2021_Hong_Yang});

\item $d_{\mathrm{HY22}}$ : the maximum of the degrees of the polynomials
appearing in the new conditions (Theorem~\ref{thm:main_result}).
\end{itemize}

\begin{lemma}
\label{lem:compare} Let $d_{\mathrm{YHZ}}(\boldsymbol{\mu})$,$d_{\mathrm{HY21}%
}(\boldsymbol{\mu})$ and $d_{\mathrm{HY22}}(\boldsymbol{\mu})$ denote the
maximum degrees of the polynomials appearing in YHZ's condition, HY21's
condition and HY22's condition for a given $\boldsymbol{\mu}=(\mu_{1}%
,\ldots,\mu_{m})\in\mathcal{M}(n)$, respectively. Then we have:

\begin{enumerate}
\item Under some minor and reasonable assumption (see \cite[Assumption
2]{2021_Hong_Yang_arxiv}),
\[
d_{\mathrm{YHZ}}(\boldsymbol{\mu})= \left\{
\begin{array}
[c]{ll}%
\prod\limits_{j=1}^{\mu_{2}}(2\,m_{j}-1)\left\{
\begin{array}
[c]{lll}%
1 & \text{if} & \mu_{1}=\mu_{2}\\
1+\frac{2}{2m_{\mu_{2}}-1} & \text{if} & \mu_{1}=\mu_{2}+1\\
\left(  2\left(  \mu_{1}-\mu_{2}\right)  -1\right)  & \text{if} & \mu_{1}%
>\mu_{2}+1
\end{array}
\right. \;\;\;\;\;\;\geq\;\;2n+3^{\mu_{2}}-4\mu_{2}, & \text{for}\ m>1;\\
2n-1, & \text{for}\ m=1,
\end{array}
\right.
\]

where $m_{i}=\#\{\mu_{k}:\,\mu_{k}\ge i\}$;

\item $d_{\mathrm{HY21}}(\boldsymbol{\mu})=2n-1;$

\item $d_{\mathrm{HY22}}(\boldsymbol{\mu})=2n-2.$
\end{enumerate}
\end{lemma}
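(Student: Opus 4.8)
The plan is to treat the three items separately: items~2 and~3 reduce to reading off the size of a single determinantal matrix, whereas item~1 requires tracking the degree through a tower of nested subresultant matrices and is the only place where real work (and the genericity hypothesis) enters. I would do item~3 in full and indicate how items~1 and~2 follow from the cited works.

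\textbf{Item 3.} First I would determine the size $N(\boldsymbol{\gamma})$ of the matrix whose determinant defines $D(\boldsymbol{\gamma})$. For $\boldsymbol{\gamma}=(\gamma_{1},\ldots,\gamma_{s})\in\mathcal{M}(n)$ the block coming from $F^{(0)}$ contributes $\gamma_{0}$ rows and the block from $F^{(i)}$ contributes $\gamma_{i}$ rows; since $\gamma_{0}=\gamma_{1}-1$ and $\gamma_{1}+\cdots+\gamma_{s}=n$, we get $N(\boldsymbol{\gamma})=n+\gamma_{1}-1$. Every entry of the coefficient matrix is a homogeneous linear form in $a_{0},\ldots,a_{n}$ (it is a coefficient of some $F^{(k)}$), so $\operatorname*{dp}$ is homogeneous of degree $N(\boldsymbol{\gamma})$ and $D(\boldsymbol{\gamma})=\operatorname*{dp}/a_{n}$ has degree at most $N(\boldsymbol{\gamma})-1=n+\gamma_{1}-2$ (the exact divisibility by $a_{n}$ is already part of the paper). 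For a fixed $\boldsymbol{\mu}$, Theorem~\ref{thm:main_result} shows that the polynomials appearing in the HY22 condition are exactly the $D(\overline{\boldsymbol{\mu}_{j}})$ for $j=0,\ldots,i$ (where $\boldsymbol{\mu}_{i}=\boldsymbol{\mu}$); since $\gamma_{1}\le n$ for every partition, each of these has degree $\le 2n-2$. It remains to exhibit one of exact degree $2n-2$. The partition $(n)$ is the lexicographically largest partition of $n$, hence $\overline{\boldsymbol{\mu}_{0}}=(n)$ and $D((n))$ occurs in every HY22 condition (as the sole inequation when $\boldsymbol{\mu}=(1^{n})$, and as the first vanishing constraint otherwise). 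By Lemma~\ref{lem:D_mu_in_roots}, cancelling the two Vandermonde factors gives $D((n))=a_{n}^{n-2}\prod_{i=1}^{n}F^{(1)}(\alpha_{i})$, a nonzero polynomial; being homogeneous of degree $N((n))-1=2n-2$, it has degree exactly $2n-2$. Hence $d_{\mathrm{HY22}}(\boldsymbol{\mu})=2n-2$.

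\textbf{Items 1 and 2.} For item~2 I would recall the explicit discriminants of \cite[Theorem~6]{2021_Hong_Yang}: the subresultant-type subdiscriminants of $F$ and $F^{(1)}$ sit in Sylvester--Habicht matrices of size $2n-1$ with entries linear in $a$ (the top one being $\operatorname{Res}(F,F^{(1)})$, of degree exactly $2n-1$), and the ``permanent'' discriminant is a sum of $(2n-1)\times(2n-1)$ determinants with the same linear entries, hence again of degree $2n-1$; the remaining inequations live in strictly smaller matrices, so the maximum is $2n-1$ for every $\boldsymbol{\mu}$. For item~1, YHZ's discriminants are produced by iterated parametric gcd: at level $j$ one forms subresultants of the level-$(j-1)$ gcd and its derivative, whose entries are themselves polynomials in $a$ of the previous degree, so the degree multiplies by the size of the level-$j$ subresultant matrix at each level. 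Under \cite[Assumption~2]{2021_Hong_Yang_arxiv} the surviving subresultant matrices have size $2m_{j}-1$, yielding the product $\prod_{j=1}^{\mu_{2}}(2m_{j}-1)$; the three cases $\mu_{1}=\mu_{2}$, $\mu_{1}=\mu_{2}+1$, $\mu_{1}>\mu_{2}+1$ record how many extra gcd/subresultant steps are needed to separate the top multiplicity, and the bound $\prod_{j=1}^{\mu_{2}}(2m_{j}-1)\ge 2n+3^{\mu_{2}}-4\mu_{2}$ is a routine estimate given $m_{1}\ge\cdots\ge m_{\mu_{2}}\ge 1$ and $m_{1}+\cdots+m_{\mu_{2}}\le n$; I would transcribe this computation from \cite{2021_Hong_Yang_arxiv} rather than redo it. The separate value $2n-1$ for $m=1$ comes from the top subresultant $\operatorname{Res}(F,F^{(1)})$.

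\textbf{Main obstacle.} The only genuine difficulty is item~1: because a YHZ discriminant is a determinant whose entries are themselves determinants, its degree is not the size of any single matrix, and controlling it requires the combinatorics of the parametric subresultant chains together with the genericity assumption that keeps those chains from degenerating. Items~2 and~3 are immediate once the matrix sizes are read off; the one subtlety there is the verification, carried out above via Lemma~\ref{lem:D_mu_in_roots}, that $D((n))$ does not lose degree.
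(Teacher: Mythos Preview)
Your approach matches the paper's almost exactly: for item~1 both you and the paper defer to the appendix of \cite{2021_Hong_Yang_arxiv} (the paper only treats $m=1$ explicitly); for item~2 both identify the $0$-th subdiscriminant $\operatorname{Res}(F,F')$ as the polynomial of maximal degree $2n-1$; for item~3 both observe that $\boldsymbol{\gamma}=(n)$ is always present and gives degree $2n-2$. Your treatment of item~3 is in fact more careful than the paper's, since you actually verify via Lemma~\ref{lem:D_mu_in_roots} that $D((n))$ is nonzero rather than merely reading off the matrix size.

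One small inaccuracy in item~2: the ``permanent'' multiplicity discriminant in \cite{2021_Hong_Yang} is a sum of determinants of size $(2n-\mu_{m})\times(2n-\mu_{m})$, not $(2n-1)\times(2n-1)$ (the top block has $n-\mu_{m}$ rows, the bottom block $n$ rows). This does not affect your conclusion, since $2n-\mu_{m}\le 2n-1$ and the subdiscriminants already attain $2n-1$; but you should correct the stated size.
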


\begin{proof}
\

\begin{enumerate}
\item When $m=1$, $\boldsymbol{\mu}=(n)$. In this case, the condition for the
polynomial having multiplicity structure $\boldsymbol{\mu}$ is given by the
$0$-th,\ldots,$(n-1)$-th subdiscriminants. Thus the maximum degree
$d_{\mathrm{YHZ}}(\boldsymbol{\mu})$ is $2n-1$, achieved at the $0$-th subdiscriminant.

When $m>1$, see \cite[Appendix]{2021_Hong_Yang_arxiv} for a detailed proof.

\item Recall that HY21's condition consists of two parts: (i) the
$0$-th,\ldots,$(n-m)$-th subdiscriminants whose highest degree is $2n-1$; (ii)
the multiplicity discriminant given by
\[
\sum_{{\sigma}\in S_{p}}\operatorname*{dp}\left[
\begin{array}
[c]{c}%
x^{n-\mu_{m}-1}F\\
\vdots\\
x^{0}F\\
x^{n-1}F^{\left(  \sigma_{_{1}}\right)  }/\sigma_{1}!\\
\vdots\\
x^{0}F^{\left(  \sigma_{_{n}}\right)  }/\sigma_{n}!
\end{array}
\right]
\]
where $p=(\underset{\mu_{1}}{\underbrace{\mu_{1},\ldots,\mu_{1}}},\ldots,$
$\underset{\mu_{m}}{\underbrace{\mu_{m},\ldots,\mu_{m}}})$ and $S_{p}$ is the
set of all permutations of $p$. It is easy to see that the degree of the
multiplicity discriminant is $2n-\mu_{m}$. Hence the maximum degree of the
above discriminants is $2n-1$.

\item HY22's condition only consists of the multiplicity discriminants given
by
\[
D\left(  \boldsymbol{\gamma}\right)  =\frac{1}{a_{n}}\operatorname*{dp}\left[
\begin{array}
[c]{l}%
F^{(0)}x^{\gamma_{0}-1}\\
~~~~\ \vdots\\
F^{(0)}x^{0}\\\hline
F^{(1)}x^{\gamma_{1}-1}\\
~~~~\ \vdots\\
F^{(1)}x^{0}\\\hline
~~~~\ \vdots\\\hline
F^{(s)}x^{\gamma_{s}-1}\\
~~~~\ \vdots\\
F^{(s)}x^{0}%
\end{array}
\right]
\]
where $\boldsymbol{\gamma}=(\gamma_{1},\ldots,\gamma_{s})$ ranges over
$\overline{\boldsymbol{\mu}}\prec_{\operatorname*{lex}}\cdots\prec
_{\operatorname*{lex}}(n)$. Note that the highest degree is achieved when
$\boldsymbol{\gamma}=(n)$. In this case, the degree of $D\left(
\boldsymbol{\gamma}\right)  $ is $2n-2$.
\end{enumerate}
\end{proof}

\begin{remark}
It is noted that in HY21's condition, the multiplicity discriminant is always
divisible by the leading coefficient $a_{n}$ and thus with this division
carried out, the degree can be made smaller by $1$.
\end{remark}

By Lemma \ref{lem:compare}, the maximum degree in YHZ's condition grows
exponentially with respect to $n$ while the maximum degrees in HY21 and HY22's
conditions grow linearly. Below we show a comparison with examples where
$n<10$.

\begin{minipage}{1.5\textwidth}
\begin{minipage}[!b]{0.3\textwidth}
\makeatletter\def\@captype{table}
\begin{center}
\begin{tabular}{cccc}\toprule
$n$ & $d_{\mathrm{YHZ}}$ & $d_{\mathrm{HY21}}$ & $d_{\mathrm{HY22}}$ \\
\hline
3 & 5 & 5 & 4 \\[-2pt]
4 & 9 & 7 & 6 \\[-2pt]
5 & 15 & 9 & 8 \\[-2pt]
6 & 27 & 11 & 10 \\ [-2pt]
7 & 45 & 13 & 12 \\ [-2pt]
8 & 81 & 15 & 14 \\ [-2pt]
9 & 135 & 17 & 16 \\\bottomrule
\end{tabular}
\caption{Comparison on the maximal degrees\\of polynomials in the conditions generated with\\
the three methods}
\end{center}
\end{minipage}
\begin{minipage}[!b]{0.3\textwidth}
\makeatletter\def\@captype{figure}
\begin{center}
\includegraphics[width=0.9\textwidth]{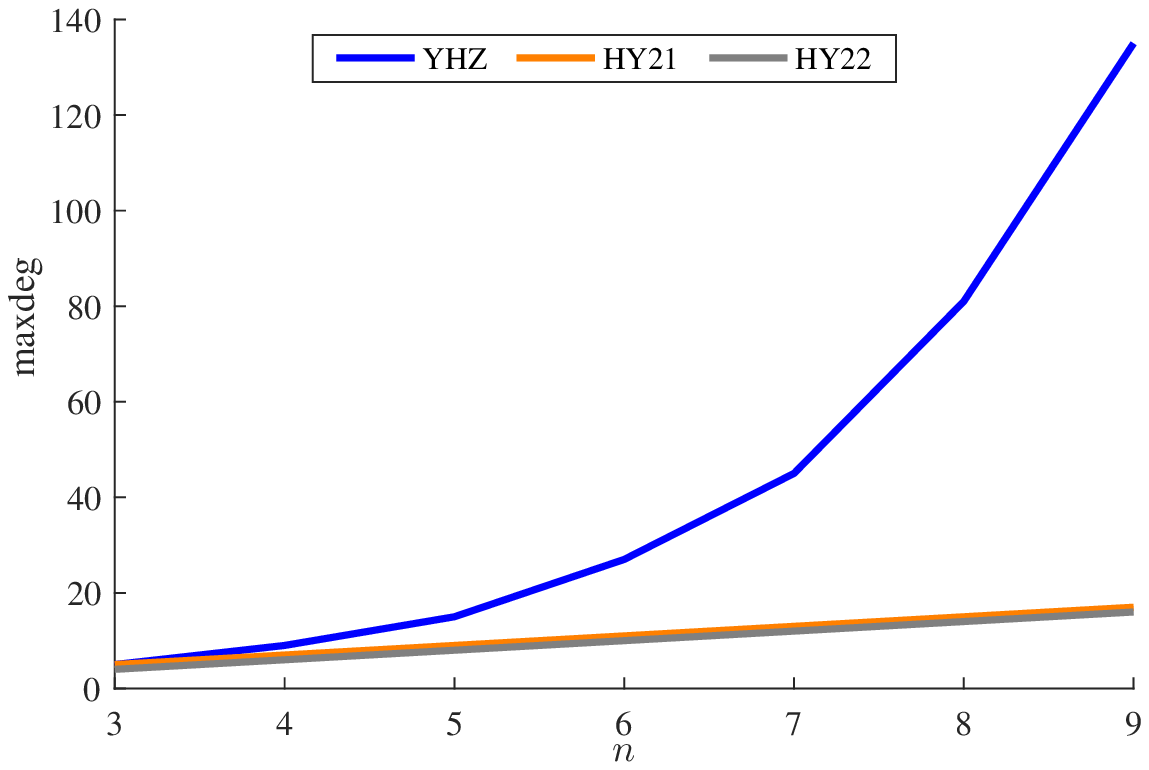}
\caption{An illustration on the changes of maximal degrees of polynomials in the conditions generated with the three methods along with the degree $n$}
\end{center}
\end{minipage}
\end{minipage}

\medskip\noindent\Acknowledgements{Hoon Hong's work was
supported by National Science Foundations of USA (Grant Nos: 2212461 and 1813340).
 Jing Yang's work was
supported by National Natural Science Foundation of China (Grant Nos.:
12261010 and 11801101).}

\end{document}